\newcommand{\lR}{\mathrm{I\hspace{-0.7mm}R}}
\newtheorem{theorem}{Theorem}
\newtheorem{lemma}{Lemma}
\newtheorem{corollary}{Corollary}
\numberwithin{equation}{section}
\begin{document}
\pagestyle{plain}

%%%%%%%%%%%%%%%%%%%%%%%%%%%%%%%%%%%%%%%%%%%%%%%%%%%%%%%%%%%%%%%%%%%%%%%%%%%%%%%%%%%%%%%%%%%%

%% TITLE %%

%%%%%%%%%%%%%%%%%%%%%%%%%%%%%%%%%%%%%%%%%%%%%%%%%%%%%%%%%%%%%%%%%%%%%%%%%%%%%%%%%%%%%%%%%%%%

\title{\LARGE\textbf{  Static  Black Holes of  Higher Dimensional Einstein-Skyrme System with General Couplings}}

\author{\normalsize{Bobby Eka Gunara}$^{\sharp}$\footnote{Corresponding author}, Emir Syahreza Fadhilla$^{\sharp, \ddagger}$, 
and 
Ardian Nata Atmaja$^{\ddagger}$ \\ \\
$^{\sharp}$ \textit{\small Theoretical High Energy Physics Research Division,}\\
\textit{\small Faculty of Mathematics and Natural Sciences,}\\
\textit{\small Institut Teknologi Bandung}\\
\textit{\small Jl. Ganesha no. 10 Bandung, Indonesia, 40132}
\\ {\small and} \\
$^{\ddagger}$\textit{\small Research Center for Quantum Physics, National Research and Innovation Agency (BRIN),}\\
\textit{\small Kompleks PUSPIPTEK Serpong, Tangerang 15310, Indonesia}\\
\\
\small email: bobby@itb.ac.id, emirsyahreza@students.itb.ac.id, ardi002@brin.go.id}

\date{\today}

\maketitle

%%%%%%%%%%%%%%%%%%%%%%%%%%%%%%%%%%%%%%%%%%%%%%%%%%%%%%%%%%%%%%%%%%%%%%%%%%%%%%%%%%%%%%%%%%%%

%% Abstract %%

%%%%%%%%%%%%%%%%%%%%%%%%%%%%%%%%%%%%%%%%%%%%%%%%%%%%%%%%%%%%%%%%%%%%%%%%%%%%%%%%%%%%%%%%%%%%

\begin{abstract}

We construct  hairy static black holes of higher dimensional general coupling Einstein-Skyrme theories with the scalar potential turned on and the cosmological constant is non-positive in which the  scalar multiplets satisfy $O(d+1)$ model constraint where $d$ is the spatial dimension of the spacetime and  $d \ge 3$.  Some physical properties of solutions near the boundaries, namely, near the (event) horizon and in the asymptotic region are discussed. Then, we prove that these black hole solutions exist globally which may have finite energy for non-positive cosmological constant. Finally,  we use perturbative method to perform a linear dynamical stability analysis and then, show the existence of stable and unstable solutions in the model.

\end{abstract}

%%%%%%%%%%%%%%%%%%%%%%%%%%%%%%%%%%%%%%%%%%%%%%%%%%%%%%%%%%%%%%%%%%%%%%%%%%%%%%%%%%%%%%%%%%%%

%\date{25/1/2021}

%\begin{abstract}
%Your abstract.
%\end{abstract}

\section{Introduction}
\label{sec:intro}

Black holes in Einstein-Skyrme model are the evidence of  counterexamples to the no-hair conjecture for black holes  which states that a  black hole can  only be characterized by its mass and its electric and magnetic charges. These objects up to  now have been studied mostly  in four dimensions where the Skyrme field is an $SU(2)$ valued. In particular, several models admitting  asymptotically flat spacetimes exist in the literature, see for example,  \cite{Volkov:2016ehx}.  We have only few models with  asymptotically anti-de Sitter spacetimes \cite{Shiiki:2005aq, Shiiki:2005xn, Perapechka:2016cof}, while there is only a model with  de Sitter background \cite{Brihaye:2005an}.

There are  some higher dimensional models which can be mentioned as follows.  A five dimensional Einstein-Skyrme model has been considered which can be thought of as  an $O(5)$ sigma model coupled to gravity \cite{Brihaye:2017wqa}. In the model, we have  some universality properties that can be adopted to any dimension higher than five. For example, a topological charge is globally  defined on the spacetime. The second model is a seven dimensional Skyrme  branes \cite{BlancoPillado:2008cp} which provides  a brane world scenario. The authors set that  the skyrmion field defines  on a warped spherically symmetric three dimensional submanifold of  a seven-dimensional spacetime which is conformal to $\lR^{3+1} \times {\mathcal S}^3$ where $\lR^{3+1}$ and $ {\mathcal S}^3$ are four-Minkowski spacetime and three-conformally flat geometries, respectively. Another example of  black holes in higher dimensional  Einstein-Skyrme theories is studied in \cite{Gunara:2018lma}. In this latter model, the authors study the Einstein-Skyrme theory  with the cosmological constant $\Lambda \le 0$ on a $d+1$-dimensional nontrivial static spacetime $\mathcal{M}^{d+1}$ which is conformal to  $ \mathcal{M}^{3+1} \times \mathcal{N}^{d-3}$ where $\mathcal{M}^{3+1}$ and $\mathcal{N}^{d-3}$ are  the four dimensional spacetime  and the compact  $(d-3)$-dimensional submanifold, respectively. The theory that the authors considered is up to the fourth term and the Skyrme field is still an $SU(2)$ valued. Then, they construct a family of static black holes and prove the existence of such solutions with finite energy if  $\Lambda = 0$.

The purpose of this paper is to provide the analysis  of static black hole solutions of higher dimensional  Einstein-Skyrme models with general couplings including  the scalar potential $V(\phi)$ and the cosmological constant $\Lambda \le 0$. In particular, we set  the spacetime $\mathcal{M}^{d+1}$ with  $d \ge 3$ to be static and  conformal to $ \mathcal{M}^{1+1} \times S^{d-1}$ where $\mathcal{M}^{1+1}$ and $S^{d-1}$ are  the two dimensional spacetime  and the compact  $(d-1)$-dimensional sphere, respectively, with the metric functions $\delta(r)$ and $m(r)$. We also have  an $O(d+1)$ valued Skyrme field which is locally defined on the submanifold  $\mathcal{S}^{d} \subseteq \mathcal{M}^{d+1} $, where $\mathcal{S}^{d}$ is conformal to $\lR^+ \times S^{d-1} $  with $\lR^+ $   being positive real number. This Skyrme field can further be simplified into a form that can be expressed  in terms of a profile function  $\xi \equiv \xi(r)$ where $r$ is  the radial coordinate \cite{Date:1986be}. 

We write down  some consequences of the above as follows. First, the covariant topological current such as baryon number lives locally  on the submanifold  $\mathcal{S}^{d}$.  Second,  to obtain  physical black hole, we have to specify the behavior of the functions $\delta(r), ~ m(r)$,  and $\xi(r)$  on the boundaries,  namely, at the (event)  horizon and the outer boundary, together with  their local-global existence and their  linear stability. Finally, we want to mention that since our model consists of general couplings in diverse dimension ($d \ge 3$) which is a complicated structure, it is not necessary to use the notion of branches related to  the value of  $\xi(r)$ on the horizon.  Instead, we just need  the value of  $\xi(r)$ on the horizon to be regular (or finite) which will be useful to establish the local-global existence of solutions. 
  
 %\Bnote{We must examine the effects for the normal skyrmions black holes in terms of existence of the compact extradimensions. We must describe in more detail their purpose of the model and its expected outcomes!}

 In order to have a well-defined model, we first have to analyze the functions $\delta(r)$, $m(r)$, and $\xi(r)$ near the boundaries.   Near the (event) horizon, all the function can be linearly expanded such that these functions become fixed  on the horizon. At this region, the spacetime $\mathcal{M}^{d+1}$ breaks into  ${\mathcal T}^{1+1} \times S^{d-1} $  where the 2-surface ${\mathcal T}^{1+1}$ could be either a flat Minkowski surface $\lR^{1+1}$ or  an anti-de Sitter surface $AdS_2$ \cite{Kunduri:2007vf}. Analysis on the Ricci scalar implies that the value of  $\xi(r)$ on the horizon has to satisfy an inequality in order to have a consistent solution. In the asymptotic limit, we set the decay rate of all the functions $\delta(r), ~ m(r)$, and $\xi(r)$ to be of the form $O(r^{-n})$ with $n \ge 1$ such that the lowest order of these functions are constants implying that  the black hole spacetime converges Einstein geometry.

Next, we construct local-global existence and uniqueness of this skyrmionic black hole solution. By employing  Picard's iteration and the contraction mapping properties, we firstly show local existence and uniqueness of the solutions. Then, using the uniqueness property we argue that the local solution can be extended to the maximal solution by gluing some of the local solutions. Since we have particularly the decay properties of the functions $\delta(r), ~ m(r)$,   and $\xi(r)$  of the form $O(r^{-n})$ with $n \ge 1$ in the asymptotic region, it can be established a family of  global solutions whose energy is finite by taking the values of both the field $\xi(r)$ and the scalar potential to be vanished in this region for arbitrary    $\Lambda \le 0$. 

 Finally, we discuss the so called linear dynamical stability of solutions using perturbation method to obtain a linear equation called Sturm-Liouville equation in which the problem is reduced to  an eigenvalue problem. It is important to notice that the stability here is in the context of Sturm-Liouville theory. Using the behavior  of the functions $\delta(r), ~ m(r)$,   and $\xi(r)$  in the asymptotic region and applying so called the fixed point theorem \cite{Krasnosel_1964}, it can be shown the existence of both stable and unstable solutions  for any $\Lambda \le 0$.

% To show stability, it requires all of the eigenfrequencies to be real, while for instability it needs the presence of at least an eigenfrequency with an imaginary part. %using perturbative method. We can only show that there exist a family of solutions with 'eigenfrequency' $\omega^2 > 0$ for any $\Lambda \le 0$.  

We organize this  paper  as follows. In Section \ref{sec:Skyrmemod} we discuss shortly the Skyrme model in diverse dimension.  The static solutions in the theory are considered in Section \ref{sec:EinsteinSkyrmemod}. We perform the analysis  of the solutions near the boundaries, namely, the (event) horizon and the asymptotic region in Section \ref{sec:boundaryprop}. In Section \ref{sec:ExisSolFinitE} we prove that there exists a family of unique global  solutions with finite energy in the model. In Section  \ref{sec:linearstableanal} we consider the dynamical linear stability of solutions and show the existence of stable and unstable solutions.  

\textbf{Conventions and Notations}.
In this paper we use alphabets, \(a,b,c,\dots\), as scalar multiplets indices and mid alphabets, \(i,j,k,\dots\) as spacetime indices. Both indices sets take value from \(0\) to \(d\) with \(d+1\)is the spacetime dimensionality. A one-form component is represented by \(\omega_i\) with lower index and its dual is represented by \(\omega^i\) with upper index. The spacetime coordinates are denoted by \(x^i\) and its derivative operator is \(\frac{\partial}{\partial x^i}\), or sometimes written as \(\partial_i\). The multiple scalar  index is written as an upper index and the square bracket \([\dots]\) on indices represents antisymmetrization.
%\Enote{Saya tambahkan penjelasan notasi dan konvensi}

\section{Skyrme Model in Diverse Dimension}
\label{sec:Skyrmemod}

%\subsection{Skyrme Model in Four Dimensions}
%\label{subsec:Skyrmemod4d}

In this section we shortly discuss the Skyrme model in $d \ge 3$ spatial dimensions. The starting point is to consider the standard  Skyrme model in \(3+1\) dimension whose action is of the form
\begin{equation}\label{eq:SkyrmeL4d}
    \mathcal{S}_4= \int d^4x ~ \sqrt{-g}\left( \gamma_1g^{ij}~Tr(L_i L_j)+\gamma_2g^{ij}g^{kl}~Tr([L_i, L_k][L_j,L_l])\right) ~ ,
\end{equation}
with \(L_i=U^{\dagger}\partial_i U\) where \(U=\phi^0I_{2}+\bold{i}\vec{\phi}.\vec{\sigma}\) is an \(SU(2)\) valued chiral field originally  proposed by T. Skyrme \cite{Skyrme:1961vq, Skyrme:1962vh}.  The vector \(\vec{\sigma}=(\sigma_1,\sigma_2,\sigma_3)\) are the Pauli matrices, while  $(\phi^0,\vec{\phi})$ are real scalar field satisfying \(O(4)\) model condition, $ \phi^a\phi^a=1$, see also for example,  \cite{Brihaye:2017wqa, Manton:2004tk}. 

In fact, one can work another way  to construct such functional by introducing a so called strain tensor  \(D=JJ^T\) where \(J\) is the Jacobian matrix of the map \cite{manton1987}.  The energy functional of static Skyrme model can be constructed from the invariants of the strain tensor $D$ which are the combinations of its eigenvalues. Then, the Lagrangian from \eqref{eq:SkyrmeL4d} can be expressed to be the sum of these two terms
\begin{eqnarray}
    \phi^a_i\phi^a_jg^{ij} ~ , \\
    \phi^a_{[i}\phi^b_{j]}\phi^a_{[k}\phi^b_{l]}g^{ij}g^{kl} ~ ,
\end{eqnarray}
where \(\phi^a_i \equiv \frac{\partial\phi^a}{\partial x^i}\) and \(g^{ij}\) are the components of the inverse of metric tensor. This construction of Skyrme model can be employed  to  generalize \eqref{eq:SkyrmeL4d} either for including the higher order terms \cite{Gudnason_2017} or for extending the Skyrme model in five dimensions \cite{Brihaye:2017wqa}. 
%%%%%%%%%

%\subsection{Generalization in Higher Dimensions}
%\label{subsec:Skyrmemodhd}

Now, let us consider  the eigenvalues of \(D_{ij}\) from a \(d+1\) dimensional Skyrme model, that is, \(\lambda_1^2,\lambda_2^2,\dots,\lambda_d^2\). The most general Lagrangian can be written as
\begin{equation} \label{eq:SkyrmeLdgen}
    {\mathcal L}  = - \gamma_0 V-\sum_{n=1}^d \gamma_n\mathcal{L}_{2n} ~ ,
\end{equation}
where $\gamma_p  \ge 0$, $p = 0,...,d$,  are the coupling constants, \(V\equiv V(\phi)\) is a scalar potential, and \(\mathcal{L}_{2n}\)  have the form
\begin{eqnarray}
    \mathcal{L}_{2}&\propto& \lambda_1^2+\dots+\lambda_d^2 ~ , \nonumber\\ 
    \mathcal{L}_{4}&\propto& \lambda_1^2\lambda_2^2+\lambda_1^2\lambda_3^2+\dots+\lambda_{d-1}^2\lambda_d^2  ~ , \nonumber\\
    \vdots \nonumber\\
    \mathcal{L}_{2d}&\propto& \lambda_1^2\lambda_2^2\dots\lambda_d^2 ~ .
\end{eqnarray}
In four dimensional case (\(d=3\)), we have  the standard Skyrme model  with additional BPS-Skyrme term (\(\mathcal{L}_{6}\propto \lambda_1^2\lambda_2^2\lambda_3^2\)) proposed in \cite{Adam:2010zz}.

Using the above prescription and identifying
\begin{eqnarray}\label{eq:SkyrmeLdgenterm}
    \mathcal{L}_{2n}  = \frac{1}{\left(n!\right)^2} \phi^{a_1}_{[i_1}\dots\phi^{a_n}_{i_n]}\phi^{a_1}_{[j_1}\dots\phi^{a_n}_{j_n]}g^{i_1j_1}\dots g^{i_nj_n} ~ ,
\end{eqnarray}
the Lagrangian of the model has the form
\begin{equation}\label{eq:SkyrmeLdgenform}
    {\mathcal L} =\gamma_0 V+\sum_{n=1}^d \frac{\gamma_n}{\left(n!\right)^2}\phi^{a_1}_{[i_1}\dots\phi^{a_n}_{i_n]}\phi^{a_1}_{[j_1}\dots\phi^{a_n}_{j_n]}g^{i_1j_1}\dots g^{i_nj_n} ~ ,
\end{equation}
which is just an \(O(d+1)\) model  proposed in \cite{Brihaye:2017wqa}. 

%\Enote{Pembahasan terkait ansatz saya pindah ke section selanjutnya}
%\subsection{Generalized Hedgehog Ansatz}
%\label{subsec:SkyrmemodAnsatz}

%%%%%%%%%%%%%%%%
%%%%%%%%%%%%

\section{Static Solutions in Einstein-Skyrme Model}
\label{sec:EinsteinSkyrmemod}

We can now couple the above Skyrme to  gravity via Einstein-Hilbert Lagrangian
\begin{equation}\label{eq:EinsteinSkyrmeLdgenform}
    {\mathcal L}_g = \frac{1}{2} (R - 2 \Lambda ) - \gamma_0 V-\sum_{n=1}^d \frac{\gamma_n}{\left(n!\right)^2}\phi^{a_1}_{[i_1}\dots\phi^{a_n}_{i_n]}\phi^{a_1}_{[j_1}\dots\phi^{a_n}_{j_n]}g^{i_1j_1}\dots g^{i_nj_n} ~ ,
\end{equation}
where $R$ and $\Lambda$ are the Ricci scalar and the cosmological constant, respectively, defined on $(d+1)$-dimensional spacetime ${\mathcal M}^{d+1}$. Throughout the paper, we take $\Lambda \le 0$. In order to be compatible with the discussion in the preceding section, we take the ansatz metric to be static spherical symmetric
\begin{equation}\label{eq:metricans}
    ds^2=-e^{2\delta}f ~ dt^2+\frac{dr^2}{f} + r^2  R_0^2 ~ d\Omega^2_{d-1} ~ ,
\end{equation}
where \(\delta\equiv\delta(r)\), \(f\equiv f(r)\), and \(d\Omega^2_{d-1}\) is the metric of $(d-1)$-dimensional sphere $S^{d-1}$ with radius $R_0 > 0$. 
For the Skyrme field, we use generalized hedgehog ansatz \cite{Date:1986be} which has the following form
\begin{equation}\label{ansatz}
    \phi=(\cos\xi,\Vec{n}\sin\xi)
\end{equation}
 with \(\Vec{n}.\Vec{n}=1\), \(\vec{n}\) is a vector in \(\mathbb{R}^d\). In fact, the ansatz \eqref{ansatz} is a natural choice for a spherically symmetric system. We can motivate this as follows: such ansatz is a coordinate system on target space with the following metric \(ds_{target}^2=d\xi^2+\sin^2\xi d\Omega_{d-1}^2\) where \(\Vec{n}.\Vec{n}=1\) directly represents \(S^{d-1}\) manifold with metric \(d\Omega_{d-1}^2\). For a spherically symmetric spacetime with metric of the form 
\(ds^2=-A(r)dt^2+\frac{1}{f(r)}dr^2+r^2d\Omega_{d-1}^2\) we can immediately observe that both manifold have the same submanifold because both contain \(S^{d-1}\) manifold on which the spherical symmetry applies. Furthermore, if we chose the field \(\xi\) to be \(\xi(r)\), the map \(\phi\) simply become a coordinate transformation \(r\rightarrow\xi(r)\). From this point of view we expect that the angular dependencies of \(\phi\) are decoupled in the full field equations, hence leaves only a single spherical symmetric field equation for \(\xi\).

%\Enote{Saya Tambahkan penurunan Lagrangian efektif}
Here, we show the details on how we can calculate the effective lagrangian by utilizing eigenvalues of strain tensor \(D\) with \(D^i_j=g^{ik}\phi^a_k\phi^a_j\) as its components. The component of tensor \(D\) in mixed form for the ansatz mentioned above is given by
\begin{equation}
    D^i_j=g^{ik}\partial_k\xi\partial_j\xi+\sin^2\xi g^{ik}\partial_k\Vec{n}\cdot\partial_j\Vec{n}
\end{equation}
To make the calculation simpler, we can choose a spatial coordinate on which the strain tensor become diagonal. Firstly, we choose the same parameterization for both \(S^{d-1}\) submanifold of spacetime and target space of \(\phi\). Hence, suppose that \(S^{d-1}\) submanifold of spacetime has the following metric
\begin{equation}
    d\Omega_{d-1}^2=d\theta_1^2+\sum_{p=2}^{d-1}\prod_{q=1}^{p-1}\sin^2\theta_q~d\theta_p^2
\end{equation}
then the vector \(n\) must be a function of \(\theta_1,\dots,\theta_{d-1}\) and \(\partial_k\Vec{n}\cdot\partial_j\Vec{n}\) is metric tensor component of \(S^{d-1}\) except \(\partial_1\Vec{n}\cdot\partial_1\Vec{n}\). Now, let us define the metric tensor on \(S^{d-1}\) as \(h_{\theta_p\theta_q}\) and choose \(\xi\) to be \(\xi\equiv\xi(r)\). We can immediately observe that the spherical components of spacetime metric \(g_{\theta_p\theta_q}\) is related to the metric of sphere manifold by \(g_{\theta_p\theta_q}=r^2 h_{\theta_p\theta_q}\). Therefore, our strain tensor components become
\begin{eqnarray}
D^1_1&=&g^{11}\xi'^2=f\xi'^2,\\
D^{\theta_p}_{\theta_q}&=&\sin^2\xi\frac{h^{\theta_p\theta_s}}{r^2}h_{\theta_s\theta_q}=\frac{\sin^2\xi}{r^2}\delta^{\theta_p}_{\theta_q}
\end{eqnarray}
The resulting tensor is diagonal, thus we can directly identify all the eigenvalues, namely
\begin{eqnarray}
\lambda_1^2&=&f\xi'^2,\\
\lambda_2&=&\dots=\lambda_d=\frac{\sin^2\xi}{r^2}
\end{eqnarray}
The next step is to formulate each invariant from combinations of eigenvalues. We know that for \(\mathcal{L}_{2n}\) there are \(C^d_n\) terms. Because only one eigenvalue is different, which is \(\lambda_1^2=f\xi'^2\), then there are \(C^{d-1}_{n-1}\) terms which have \(f\xi'^2\) as their factor. From this we can further conclude that there are \(C^d_n-C^{d-1}_{n-1}\) terms with no factor of \(f\xi'^2\), hence the invariant \(\mathcal{L}_{2n}\) is given by
\begin{equation}
    \mathcal{L}_{2n}=\frac{\sin^{2(n-1)}\xi}{r^{2(n-1)}}\left( C^{d-1}_{n-1}f(\xi')^2+\left(C^{d}_{n} - C^{d-1}_{n-1}\right)\frac{\sin^2\xi}{r^2}\right)~.
\end{equation}
This formulation can be used in a more general \(D^{\theta_i}_{\theta_j}\) which represents a non-trivial map to the target space \(\Vec{n}\cdot\Vec{n}=1\). 
The field equations for Skyrmion can be found by substituting ansatz \eqref{ansatz} to the dynamical equations of the Skyrme model, namely
            \begin{equation}
                \left(\delta^{cb}-\phi^c\phi^b\right)\left[\nabla_k\left(\sum_{n=1}^d \frac{2\gamma_n}{\left(n!\right)^2}g^{i_1j_1}\dots g^{i_n j_n}\phi^{a_1}_{[i_1}\dots\phi^{a_n}_{i_n]}\frac{\partial}{\partial \phi^b_k}\phi^{a_1}_{[j_1}\dots\phi^{a_n}_{j_n]}\right)-\frac{\partial}{\partial \phi^b} \gamma_0 V\right]=0~.
            \end{equation}
The dynamical field equations above can be simplified by exploiting Levi-Civita symbol and strain tensor, leading to the following expression of field equations
\begin{eqnarray}\label{eq: fieldEQ}
\left(\delta^{cb}-\phi^c\phi^b\right)\left[\sum_{n=1}^d \frac{2n\gamma_n}{n!(d-n)!}\nabla_{j}\left(\phi^{b,i}H_i^j(n,d)\right)-\frac{\partial}{\partial \phi^b} \gamma_0 V\right]=0~,
\end{eqnarray}
where we have defined the tensor \begin{equation}H_i^j(n,d)\equiv\varepsilon_{i~i_2\dots i_n k_{n+1}\dots k_d}\varepsilon^{j~j_2\dots j_n k_{n+1}\dots k_d}\nabla_{j_1}\prod_{m=2}^n D_{j_m}^{i_m}\end{equation} for computational purpose. \(H_i^j(n,d)\) has special features, namely, it is a diagonal tensor if the strain tensor is diagonal and only depends on radial coordinate if we substitute hedgehog ansatz in \eqref{ansatz} to \(\phi\). Upon substitution of ansatz \eqref{ansatz}, the resulting field equations \eqref{eq: fieldEQ} is effectively reduced into a single field equation (as expected, because \(\Vec{n}\) decouples) for \(\xi\) where all of the equations only differ by some angular factors, namely
\begin{equation}\label{eq:eomSkyrme}
    \xi''+\left(\frac{d-1}{r}+\delta'+\frac{f'}{f}+\frac{1}{u}\partial_r u  \right)\xi'+  \frac{(\xi')^2}{2u}\frac{\partial u}{\partial \xi}  =\frac{1}{2uf} \partial_\xi v ~ ,
\end{equation}
with $\partial_r u \equiv \frac{\partial u}{\partial r}$ and $\partial_\xi v \equiv \frac{\partial v}{\partial\xi}$.
where we have defined
\begin{eqnarray}\label{eq:uv}
    u &\equiv& \sum_{n=1}^d\gamma_n C^{d-1}_{n-1} \frac{\sin^{2(n-1)}\xi}{(rR_0)^{2(n-1)}}  ~ , \nonumber\\
    v &\equiv&  \gamma_0V+\sum_{n=1}^d \gamma_n C^{d}_{n} \left(1-\frac{n}{d}\right) \frac{ \sin^{2n}\xi}{(rR_0)^{2n}}  ~ ,
\end{eqnarray}
and  we assume $V = V(\xi)$ for the rest of this paper.
On the metric \eqref{eq:metricans}, the effective static energy functional for the Skyrme model has the form
\begin{equation}\label{eq:EeffSkyrme}
    E=\int d^{d}x \sqrt{-g^{(d+1)}}  \left( u f (\xi')^2+v \right) ~ ,
\end{equation}
 Varying the energy \eqref{eq:EeffSkyrme} with respect to $\xi$, leads to the same field equation of $\xi$  given by \eqref{eq:eomSkyrme}.

Next, the dynamics of the metric functions $\delta(r)$ and $f(r)$ are governed by the Einstein field equation coming from the variation of the action related to Lagrangian \eqref{eq:EinsteinSkyrmeLdgenform}. The components of the Einstein field equation can be simplified into the following equations
\begin{equation}
  (d-1)\frac{f}{2}\left( \frac{f'}{rf}+\frac{d-2}{r^2}\right)-\frac{(d-1)(d-2)}{2 r^2 R^2_0} + \Lambda  = -   u f (\xi')^2 - v    ~ ,   
  \label{eq:Einsteineq}    
\end{equation}
\begin{equation}
  (d-1)\frac{f}{2}\left( \frac{2\delta'f+f'}{rf}+\frac{d-2}{r^2}\right) -\frac{(d-1)(d-2)}{2 r^2 R^2_0} + \Lambda  = u f (\xi')^2 - v    ~ ,  
\label{eq:Einsteineq1}    
\end{equation}
\begin{eqnarray}
  && \frac{f}{2}\left( 2\delta''+2(\delta')^2 + \frac{f''+3\delta'f'}{f}\right) + \frac{d-2}{r}\left(f'+\delta'f\right)-\frac{(d-2)(d-3)}{2 r^2 R^2_0}\left(1-fR_0^2\right) + \Lambda \nonumber\\
 &&  \quad  = ~  \frac{1}{d (d-1)} \sum_{n=1}^d\gamma_n C^{d}_{n} ~ n \left(2n-d-1\right) \frac{\sin^{2(n-1)}\xi}{(rR_0)^{2(n-1)}}    f(\xi')^2   \nonumber\\
 && \quad \quad - \frac{1}{d (d-1)} \sum_{n=1}^d\gamma_n C^{d}_{n} \left( 2n^2-3nd+n-d+d^2\right) \frac{\sin^{2n}\xi}{(rR_0)^{2n}}  - \gamma_0 V  ~ .
\label{eq:Einsteineq2}    
\end{eqnarray} 
Additionally, we write down  the Ricci scalar of the metric \eqref{eq:metricans}
\begin{eqnarray} \label{eq:Ricciscalar}
R &=&\frac{1}{r^2 R^2_0} (d-2)(d-1) -\frac{2 }{r} (d-1)(f'+\delta' f) \nonumber\\
&&-\left[ 2f(\delta''+\delta'^2) + 3\delta' f'+f'' \right]  - (d-2)(d-1) \frac{f}{r^2} ~ .
\end{eqnarray}
which will be useful in the next section discussion.

Now, we take a particular for of the metric function $f(r)$ as 
\begin{equation}\label{eq:metform}
    f= \frac{1}{R_0^2}-\frac{2m}{(d-2)r^{d-2}}-\frac{2\Lambda}{d(d-1)}r^2 ~ ,
\end{equation}
where $m \equiv m(r)$.  Using \eqref{eq:metform}, \eqref{eq:Einsteineq} and \eqref{eq:Einsteineq1} become simply   
\begin{eqnarray}
    m' &=& \frac{(d-2)}{(d-1)}r^{d-1} \left(u f (\xi')^2+v \right) ~ , \nonumber\\
    \delta' &=& \frac{2}{(d-1)}u(\xi')^2r  ~ .\label{eq:constraints}
\end{eqnarray}
We will discuss the properties of  $m(r)$, $\delta(r)$, and $\xi(r)$ near the boundaries, namely, near the horizon ($r \to r_h$) and around the asymptotic region ($r \to +\infty$) in the next section. 
%%%%%%%%%%%%%%%%

\section{Near Boundary Properties}
\label{sec:boundaryprop}

In this section we discuss the properties of solutions of the Einstein field equation \eqref{eq:Einsteineq2}, \eqref{eq:constraints}, and the equation of motions \eqref{eq:eomSkyrme} near the boundaries. The first part of discussion is to consider the properties of $m(r)$, $\delta(r)$, and $\xi(r)$ near the horizon and then, we continue it in the asymptotic region as $r \to +\infty$.

Let us first consider the behavior of $m(r)$, $\delta(r)$, and $\xi(r)$  in the horizon limit. Suppose there exists a horizon at the radius $r = r_h$ such that $f(r_h) = 0$ implying
\begin{equation} \label{eq:massBH}
 m(r_h) = \frac{(d-2)r_h^{d-2}}{2R_0^2}-\frac{(d-2)}{d(d-1)}\Lambda r_h^d ~ .
\end{equation}
 Near the region, we can expand the metric functions  as
\begin{eqnarray}
\xi &=& \xi_h + \xi_1(r-r_h)+O((r-r_h)^2) ~ , \nonumber\\
\delta &=& \delta_h + \delta_1(r-r_h)+O((r-r_h)^2) ~ , \nonumber\\
m &=& m(r_h) + m_1(r-r_h)+O((r-r_h)^2) ~ , \label{eq:horizonexpand}
\end{eqnarray}
where $\xi_h$ and $\delta_h$ are positive constants, while $\xi_1$, $\delta_1$, and $m_1$ are real constants. Inserting \eqref{eq:horizonexpand} into  \eqref{eq:constraints} and \eqref{eq:eomSkyrme}, it then yields
\begin{eqnarray}
\delta_1 &=& \frac{2}{(d-1)}\left(\sum_{n=1}^d\gamma_n C^{d-1}_{n-1}\frac{\sin^{2(n-1)}\xi_h}{(r_hR_0)^{2(n-1)}} \right)\xi_1^2r_h ~ ,\nonumber\\
m_1 &=& \frac{(d-2)}{(d-1)}r_h^{d-1}\left( \gamma_0V(\xi_h)  +\sum_{n=1}^d \gamma_n C^{d}_{n} \left(1-\frac{n}{d}\right) \frac{\sin^{2n}\xi_h}{(r_hR_0)^{2n}}  \right)  ~ , \nonumber\\
 \xi_1 &=& \frac{\gamma_0 ~ \partial_\xi V(\xi_h)+\frac{\sin(2\xi_h)}{(r_hR_0)^2}\sum_{n=1}^d \gamma_n C^{d-1}_{n-1} (d-n)\frac{\sin^{2(n-1)}\xi_h}{(r_hR_0)^{2(n-1)}} }{2\left(\frac{d-2}{R_0^2r_h}-\frac{2\Lambda r_h}{(d-1)}-\frac{2m_1}{(d-2)r_h^{d-2}}\right)\left(\sum_{m=1}^d\gamma_n C^{d-1}_{m-1} \frac{\sin^{2(m-1)}\xi_h}{(r_hR_0)^{2(m-1)}} \right)} ~ , \label{eq:horizonexpandres}
\end{eqnarray}
which show that we have a $(d+4)$-dimensional parameter space spanned by $\xi_h$, $\delta_h$,  $M$, and $\gamma_p$ with $p = 0,...,d$.  From the last equation in \eqref{eq:horizonexpandres}, it is straightforward to see that among $\gamma_p$,  we should have at least one of them to be non-zero.

The second expression in \eqref{eq:horizonexpandres} and \eqref{eq:Einsteineq2} can be used to find the form of Ricci scalar on the horizon which takes the form
\begin{equation}\label{eq:ricciscalarhorizon}
    R|_{r_h}  = \frac{(d-1)(d-2)}{r^2_hR_0^2} + \Lambda_e  + 2 \sum_{n=1}^d \gamma_n C^{d}_{n} \left(\frac{2n^2 -3nd -n+d +d^2}{d(d-1)}\right) \frac{\sin^{2n}\xi_h}{(R_0r_h)^{2n}} ~ ,
\end{equation}
with
\begin{equation} 
    \Lambda_e  =  \frac{2(d+1)}{(d-1)} \left(\Lambda + \gamma_0 V(\xi_h)  \right) - \frac{(d-1)(d-2)}{r^2_hR_0^2}   ~ .
\end{equation}
Around this region, the spacetime topology changes to ${\mathcal T}^2 \times S^{d-1}$  where ${\mathcal T}^2$ is a  2-surface ${\mathcal T} ^2$ which could be   either  ${\mathcal T} ^2 \simeq \lR^2$ or  ${\mathcal T} ^2 \simeq AdS_2$ \cite{Kunduri:2007vf}. Thus, the second term and the third term in the right hand side of \eqref{eq:ricciscalarhorizon}  should be
\begin{equation}\label{eq:horizongeomcon}
    \Lambda_e  + 2 \sum_{n=1}^d \gamma_n C^{d}_{n}  \left(\frac{2n^2 -3nd -n+d +d^2}{d(d-1)}\right) \frac{\sin^{2n}\xi_h}{(R_0r_h)^{2n}}  \le 0 ~ .
\end{equation}

Next, we consider the behavior of $m(r)$, $\delta(r)$, and $\xi(r)$  in the asymptotic region. In order to have a finite and regular solution,  the metric functions $\delta(r)$ and $m(r)$ should be decreasing functions whose form are
\begin{eqnarray}
\delta(r) &=&    \frac{\tilde{\delta}_1}{r^{n_1}} + O\left(r^{-(n_1 +1)} \right) \ , \nonumber\\
m(r) &=&  M + \frac{\tilde{m}_1}{r^{n_2}} +O\left(r^{-(n_2 +1)} \right)  \ ,   \label{eq:Expandasymp}
\end{eqnarray}
where $M > 0$ is the ADM mass which will be discussed below, whereas $\tilde{\delta}_1 , m_1 \in \lR$,  $n_1 \ge 1$, and $n_2 \ge 1$. Moreover,  the skyrmionic scalar $\xi(r)$ would have the form  
\begin{equation}
\xi(r) = \xi_\infty   +  \frac{\tilde{\xi}_1}{r^{n_3}}+ O\left(r^{-(n_3 + 1)} \right) ~ , \label{eq:fasympcon}
\end{equation} 
where $n_3 \ge 1$ and $\xi_\infty, \tilde{\xi}_1 \in \lR$ showing that  $\xi(r)$  will  be frozen as $r \to +\infty$. Inserting \eqref{eq:Expandasymp} and \eqref{eq:fasympcon} to the constraints \eqref{eq:constraints}, we obtain
\begin{equation}
 n_1 \ge d+1 ~ , \quad n_2 \ge 1 ~ , \quad n_3 \ge \frac{1}{2} (d+1) ~ , \label{eq:nsympconres}
\end{equation} 
with either $\gamma_0 = 0$ or $\gamma_0 > 0$ and
\begin{equation}
V( \xi_\infty ) = 0 ~ . \label{eq:Vasympconres}
\end{equation} 
We also have
\begin{eqnarray}
  \tilde{\delta}_1 &=&  - \frac{n_3^2}{n_1}  \gamma_1  \tilde{\xi}_1^2 ~ ,\nonumber\\
 \tilde{m}_1 &=&    \frac{n_3^2 ~ \Lambda}{d (d-1)n_2}  \gamma_1  \tilde{\xi}_1^2  ~ .
 \label{eq:constraintsasymp}
\end{eqnarray}
The equation of motions \eqref{eq:eomSkyrme} gives us the condition
\begin{equation}
\gamma_0 ~ \partial_\xi V  (\xi_\infty) = \frac{4  \gamma_2 \Lambda}{d(d-1) R_0^2}   \sin \xi_\infty \cos \xi_\infty ~ ,   \label{eq:eomSkyrmeasymcon}
\end{equation}
where we have used the ansatze \eqref{eq:Expandasymp} and \eqref{eq:fasympcon}. 

Now, we write down the trace of the Einstein field equation
\begin{eqnarray}
R &=& \frac{2 (d+1)}{(d-1)}  \left(  \Lambda + \gamma_0 V\right) +  \frac{2}{d (d-1)} \sum_{n=1}^d\gamma_n C^{d}_{n} \left( 2n^2 -3nd -n+d +d^2\right)\frac{\sin^{2n}\xi}{(rR_0)^{2n}} \nonumber\\
 && - \frac{2}{d (d-1)} \sum_{n=1}^d\gamma_n C^{d}_{n} n \left(2n-d -1\right)  \frac{\sin^{2(n-1)}\xi}{(rR_0)^{2(n-1)}}    f(\xi')^2    ~ . \nonumber\\
 \label{eq:traceEinsteineq}
\end{eqnarray}
In this region the geometry converges to Einstein such that \eqref{eq:traceEinsteineq} should be simplified to
\begin{equation}
R = \frac{2 (d+1) \Lambda}{(d-1)}  + O\left(r^{-n} \right)   ~ ,   \label{eq:Ricciscalarasym}
\end{equation}
 with cosmological constant $2  \Lambda / (d-1)$ (or Ricci-flat with $\Lambda = 0$) such that if we subtitute  \eqref{eq:Expandasymp} and \eqref{eq:fasympcon} to \eqref{eq:traceEinsteineq}, we find either the case without the scalar potential, that is, $\gamma_0 = 0$ or \eqref{eq:Vasympconres} with $\gamma_0 > 0$.  As we will see in the next section, the finiteness of the energy functional \eqref{eq:EeffSkyrme} constrains the value of $\xi_\infty $ and the bound of $n_3$.

%\Bnote{Bahas massa ADM di sini melalui Komar integral}
Let us now discuss how to obtain the mass of the black hole. This can be done by considering the Komar integral \cite{Kastor:2008cqg, Kastor:2009cqg}  
\begin{equation}
{\mathcal{K}} = \int_{\partial\Sigma} dS_{ij} \left( \nabla^i \zeta^j + \omega^{ij} \right) \  , \label{komarint}
\end{equation} 
where $\partial\Sigma$ is the boundary of a spatial hypersurface of the spacetime ${\mathcal M}^{d+1}$. We have a Killing vector  $\zeta^i$  of ${\mathcal M}^{d+1}$ and an antisymmetric tensor  $\omega^{ij}$  satisfying \cite{Gunara:2010iu}
\begin{equation}
\nabla_i  \omega^{ij} = R^j_{~ i} ~ \zeta^i \  . \label{divergenceomega}
\end{equation}
In the case at hand,  the vector $\zeta^i$ is  time-like with $\zeta^i = (1, 0,...,0)$ such that the non-zero solution of \eqref{divergenceomega} has to be
\begin{equation}
\omega^{10} = - \frac{1}{2}(f' + 2 \delta' f) + \frac{M}{ \text{Vol}(S^{d-2})} e^{-\delta} r^{-(d-1)}  R_0^{-(d-1)}  \  , \label{solusiomega}
\end{equation}
while
\begin{equation}
 \nabla^1 \zeta^0 =  \frac{1}{2}(f' + 2 \delta' f)  \ , 
\end{equation}
where $\text{Vol}(S^{d-2})$ is the volume of the unit sphere $S^{d-2}$. The infinitesimal volume $dS_{10}$ has the form
\begin{equation}
dS_{10} = \frac{1}{2} ~ e^{\delta} r^{d-1}  R_0^{d-1} ~ d\text{Vol}(S^{d-2})   \ . 
\end{equation}
Thus, we obtain the mass of our black hole 
\begin{equation}
 M_{\text{BH}} =\lim_{r \to +\infty} M = M \ . \label{BHmass}
\end{equation}

Finally, the topological charge called baryon number $B$ for such a system is 
\begin{eqnarray}\label{eq:topocharge}
    B&=&\frac{1}{\text{Vol}(S^d)}\int \sin^{d-1}\xi\left(\prod_{k=1}^{d-2}\sin^{d-1-k}\theta_k\right)~d\xi d\theta_1\dots d\theta_{d-1}\nonumber\\
    &=&\frac{\text{Vol}(S^{d-1})}{\text{Vol}(S^d)}\int \sin^{d-1}\xi~d\xi=\mathcal{B}(\xi_h)-\mathcal{B}(\xi_\infty) ~ ,
    \end{eqnarray}
where \(\mathcal{B}(\xi)\) is defined by
    \begin{eqnarray}
   \mathcal{B}(\xi) \equiv\frac{\Gamma\left(\frac{d+1}{2}\right)}{\sqrt{\pi}\Gamma\left(\frac{d}{2}\right)}\begin{cases}C^{d-1}_{\frac{d-1}{2}}\frac{\xi}{2^{d-1}}+\frac{1}{2^{d-2}}\sum\limits_{k=0}^{\frac{d-3}{2}}C^{d-1}_{k}(-1)^{k+1}\frac{\sin((d-2k-1)\xi)}{d-2k-1} ~ ,&~\text{for}~d=3,5,7,\dots\\
    \frac{1}{2^{d-2}}\sum\limits_{k=0}^{\frac{d-2}{2}}C^{d-1}_{k}(-1)^{k}\frac{\cos((d-2k-1)\xi)}{(d-2k-1)} ~ ,&~\text{for}~d=2,4,6,\dots\end{cases}\nonumber\\
\end{eqnarray}
 We have used the definition of topological charge given in \cite{Brihaye:2017wqa}.  A vacuum solution can be obtained  if we take $\xi_h = \xi_\infty$.

%\Bnote{Contoh berikut benar?}

Let us consider an example as follows. Suppose we have $\xi_h = \pi$ and $\xi_\infty = 0$, then $\Lambda_e  \le 0$,   \eqref{eq:horizonexpandres} becomes
\begin{eqnarray}
\delta_1 &=&\frac{2\gamma_1 }{(d-1)}  \xi_1^2 r_h ~ ,\nonumber\\
m_1 &=& \frac{(d-2)}{(d-1)}r_h^{d-1} \gamma_0 V(\pi)  ~ , \nonumber\\
  \xi_1 &=&  \frac{\gamma_0 ~ \partial_\xi V(\pi) }{2 \gamma_1 \left(\frac{d-2}{R_0^2r_h}-\frac{2 r_h}{(d-1)}  \left(\Lambda -\gamma_0 V(\pi) \right)  \right)  } ~ , \label{eq:horizonexpandsol}
\end{eqnarray}
with $\gamma_1 > 0$, and the topological charge \eqref{eq:topocharge} take a unit value i.e. \(B=1\). 
It is easy to see that we have black hole solutions if $\gamma_0 > 0$ with either  $ V(\pi) \ne 0$ or $\partial_\xi V(\pi) \ne 0$. In other words, the presence of the scalar potential  term $\gamma_0 V$ plays an important role in determining the existence of black hole solutions in this theory. For example, in a model with generalized pion-mass scalar potential
\begin{equation}
 V(\xi) =  (1 - \cos\xi)^n  ~ , \quad n \ge 1 ~ ,
   \label{eq:pionpotensial}
\end{equation}
we may have a black hole solution since $V(\pi) = 2^n$ and  $\partial_\xi V(\pi) = 0$. Moreover, in the asymptotic region this model admits \eqref{eq:Vasympconres} and $\partial_\xi V  (0) = 0$ which shows that these  correspond to  finite energy solutions as we shall see in the next section.

In a case of $\gamma_0 = 0$ (or  $ V(\phi) \equiv 0$), we might have a smooth regular solution which is not a black hole.  This latter situation has been observed  in four dimensional Einstein-Skyrme theory \cite{Shiiki:2005aq}.
%
%\Bnote{Bahas ttg solusi vakum dan contoh di d=3,4}

\section{Local-Global Existence of Finite Energy Solutions}
\label{sec:ExisSolFinitE}

In this section we establish local-global existence and uniqueness of black hole solutions of the theory. By employing the Picard's iteration and the contraction mapping properties, we prove the local existence and the uniqueness. Then, using maximal solution technique we construct the global existence. We use  the expansions \eqref{eq:Expandasymp}   and \eqref{eq:fasympcon} to show the finiteness of  solutions. Finally, the finite energy solutions  are discussed. 

\subsection{Local Existence and Smoothness}
\label{subsec:LocExisSmooth}

Let us define a set of dynamical variables ${\bf{w}} \equiv (  \xi ,  p_\xi)$ where $p_\xi \equiv \xi'$. We write down the constraints \eqref{eq:constraints}  in the integral form
\begin{eqnarray}
    m &=& m_h + \frac{(d-2)}{(d-1)}  r^{-\frac{2}{d-1}}  \int  r^{d-1 + \frac{2}{d-1}} \left( \left( \frac{1}{R_0^2}  - \frac{2\Lambda}{d(d-1)}r^2\right)  u  p_\xi^2+v \right) dr ~ , \nonumber\\
    \delta &=& \delta_h + \frac{2}{(d-1)} \int  r u p_\xi^2   ~ dr ~ , \label{eq:constraintsinteg}
\end{eqnarray}
where $u \equiv u(\xi, r)$ and $v \equiv v(\xi, r)$ given in \eqref{eq:uv}. Suppose $I \equiv [r, r+ \varepsilon]$ where $r \ne r_h$, $\varepsilon > 0$,  and $U\subset \lR^2$ be an open set. 
\begin{lemma}
	\label{lemmalocalLipshitzconstraint}
Suppose the scalar potential $V$ in \eqref{eq:EinsteinSkyrmeLdgenform} is at least a $C^2$-real function. Then, the metric functions $\delta({\bf{w}},r )$ and $m({\bf{w}},r )$ in   \eqref{eq:constraintsinteg}  are locally Lipschitz with respect to $\bf{w}$.	
\end{lemma}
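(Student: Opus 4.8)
The plan is to read off the explicit structure of the two integrands in \eqref{eq:constraintsinteg} and reduce the statement to the elementary fact that a $C^1$ function on a compact set is Lipschitz. Denote by $\mathcal{F}_\delta(\mathbf{w},r)$ and $\mathcal{F}_m(\mathbf{w},r)$ the integrands defining $\delta$ and $m$ in \eqref{eq:constraintsinteg}. Using the definitions of $u$ and $v$ in \eqref{eq:uv}, each is a finite sum of terms of the form $c_n(r)\,\sin^{2(n-1)}\xi\,p_\xi^2$ or $c_n(r)\,\sin^{2n}\xi$, together with the single potential contribution proportional to $r^{d-1+2/(d-1)}\gamma_0 V(\xi)$ in $\mathcal{F}_m$, where each $c_n(r)$ is a power of $r$. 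The strategy is: first show each integrand is locally Lipschitz in $\mathbf{w}=(\xi,p_\xi)$ uniformly for $r\in I$; then transfer this to $\delta$ and $m$ by integrating over the compact interval $I$.

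First I would fix a bounded open set $U\subset\lR^2$ and work on the compact set $\overline{U}\times I$. On it every building block of $\mathcal{F}_\delta$ and $\mathcal{F}_m$ is continuously differentiable in $\mathbf{w}$: the functions $\sin^{2(n-1)}\xi$ and $\sin^{2n}\xi$ are smooth with bounded $\xi$-derivatives, the factor $p_\xi^2$ has derivative $2p_\xi$ which is bounded on $\overline{U}$, and the potential term is $C^1$ with $\partial_\xi V$ continuous by the hypothesis $V\in C^2$ (in fact $C^1$ is all that is used here). Hence $\mathcal{F}_\delta$ and $\mathcal{F}_m$ are $C^1$ in $\mathbf{w}$ on $\overline{U}\times I$, their gradients $\partial_{\mathbf{w}}\mathcal{F}_\bullet$ are continuous on a compact set and therefore bounded, and the mean value theorem gives constants $L_\delta,L_m$ independent of $r$ with
\begin{equation}
\left|\mathcal{F}_\bullet(\mathbf{w}_1,r)-\mathcal{F}_\bullet(\mathbf{w}_2,r)\right|\le L_\bullet\,\left|\mathbf{w}_1-\mathbf{w}_2\right|,\qquad \bullet\in\{\delta,m\},
\end{equation}
for all $\mathbf{w}_1,\mathbf{w}_2\in U$ and all $r\in I$.

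The second step integrates this pointwise bound. Since the outer prefactors in \eqref{eq:constraintsinteg} (a constant for $\delta$ and $r^{-2/(d-1)}$ for $m$) are bounded on $I$, bounding the modulus of the difference by the integral of the modulus of the difference of the integrands yields
\begin{equation}
\left|\delta(\mathbf{w}_1,r)-\delta(\mathbf{w}_2,r)\right|\le C_\delta\,L_\delta\,\varepsilon\,\left|\mathbf{w}_1-\mathbf{w}_2\right|,
\end{equation}
together with the analogous estimate for $m$, where $\varepsilon$ is the length of $I$. This is exactly local Lipschitz continuity in $\mathbf{w}$, with constant uniform over $U$ and over $r\in I$, which is the form needed to feed the subsequent contraction-mapping argument.

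The step I expect to be delicate is not the Lipschitz estimate but the control of the $r$-dependent coefficients. Because $u$ and $v$ contain the negative powers $(rR_0)^{-2(n-1)}$ and $(rR_0)^{-2n}$, the integrands are unbounded as $r\to 0^+$, so the boundedness of the $c_n(r)$ (and of their contributions to $\partial_{\mathbf{w}}\mathcal{F}_\bullet$) that I relied on in the first step holds only on an interval bounded away from the origin. This is guaranteed by the standing assumption that $I=[r,r+\varepsilon]$ has positive left endpoint; the further hypothesis $r\ne r_h$ keeps $I$ in the region where the underlying system — in particular the equation of motion \eqref{eq:eomSkyrme}, which carries a factor $1/f$ vanishing at the horizon — is regular, so that the present lemma dovetails with the Lipschitz analysis of the full flow needed for the Picard iteration.
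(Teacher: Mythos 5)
Your proposal is correct and follows essentially the same route as the paper's own proof: both establish Lipschitz continuity of the integrands in $\mathbf{w}$ via a mean-value-theorem estimate (the paper's inequality \eqref{anyFlocal} is exactly the bound you obtain from boundedness of $\partial_{\mathbf{w}}\mathcal{F}_\bullet$ on a compact set) and then pass the estimate through the integral over the compact interval $I$. Your explicit decomposition of $u$ and $v$ into finite sums and your remark on the coefficients' singular behavior as $r\to 0^+$ only make explicit what the paper leaves implicit; there is no substantive difference in method.
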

\begin{proof}
 First, we have
 \begin{eqnarray}
\left|  m \right|_U  &\le&  \frac{(d-2)}{(d-1)}  r^{-\frac{2}{d-1}}  \int_r^{r+ \epsilon}  r^{d-1 + \frac{2}{d-1}} \left|  \left( \frac{1}{R_0^2}  - \frac{2\Lambda}{d(d-1)}r^2\right)  u  p_\xi^2+v \right| dr \nonumber\\
&\le&  \frac{(d-2)}{(d-1)}   C  r^{d-1 }  \left|  \left( \frac{1}{R_0^2}  - \frac{2\Lambda}{d(d-1)}r^2\right) u p_\xi^2+v \right|  ~ , \label{boundedm}
\end{eqnarray}
for $C > \epsilon$ which is bounded since the function  $\xi(r)$ is at least a $C^2$-real function.  Using similar argument, we can show that  $\delta$ is also bounded.  Then, for ${\bf{w}}, \tilde{\bf{w}} \in U$ 
\begin{eqnarray}
\left|  \delta ({\bf w}, r) -  \delta( \tilde{{\bf w}}, r) \right|_U  &\le&   \frac{2r}{(d-1)} C \left| u(\xi,r) p_\xi^2 - u( \tilde{\xi }, r) \tilde{p}_\xi^2 \right| ~ ,  \nonumber\\
\left| m ({\bf w}, r) -  m( \tilde{{\bf w}}, r) \right|_U  &\le&  \frac{(d-2)}{(d-1)} C r^{d-1} \left| \left( \frac{1}{R_0^2}  - \frac{2\Lambda}{d(d-1)}r^2\right)  \left( u(\xi,r)    p_\xi^2 - u( \tilde{\xi }, r)  \tilde{p}_\xi^2\right)\right|  \nonumber\\
&&   + \frac{(d-2)}{(d-1)} C r^{d-1} \left|  v(\xi,r) -  v( \tilde{\xi }, r)  \right|    ~ .
\end{eqnarray}
 Using the fact that for any smooth function ${\mathcal F}(f)$, we have locally 
\begin{equation}
{\mathcal F}( f) - {\mathcal F}(\tilde{f} ) \leq \sup_{s\in[0,1]}\left[ {\mathcal F}'( f + s(\tilde{f} - f)) \right]  (f -  \tilde{f}) ~ , \label{anyFlocal}
\end{equation}
on $U$, we get that $\delta$ and $m$ indeed satisfy the local Lipschitz condition
\begin{eqnarray}
\left|  \delta ({\bf w}, r) -  \delta ( \tilde{{\bf w}}, r) \right|_U  &\le& C_{ \delta}(|\bf{w}|, |\tilde{\bf{w}}|) | \bf{w} - \tilde{\bf{w}}| ~ ,  \nonumber\\
\left| m ({\bf w}, r) -  m( \tilde{{\bf w}}, r) \right|_U  &\le& C_m (|\bf{w}|, |\tilde{\bf{w}}|) | \bf{w} - \tilde{\bf{w}}|  ~ ,  \label{localLipshitzconstraints}
\end{eqnarray}
on an open set $U\subset \lR^2$ where $C_{ \delta}(|\bf{w}|, |\tilde{\bf{w}}|)$ and $C_m(|\bf{w}|, |\tilde{\bf{w}}|)$  are bounded positive-valued functions . 
\end{proof}
Next, we  rewrite  \eqref{eq:eomSkyrme} into
\begin{equation}
\frac{d \bf{w}}{dr} =  \mathcal{J}({\bf w}, r) ~ ,  \label{ReducedEinsteineqSkyrmionEq1}
\end{equation}
where  
\begin{equation}
  \mathcal{J}({\bf{w}}, r)   \equiv  \left( \begin{array}{c}
p_\xi \\
J_\xi
\end{array} \right) ~ , \label{fungsiJ}
\end{equation}
where
\begin{eqnarray}
J_\xi \equiv  - \left(\frac{d-1}{r}+\delta'+\frac{f'}{f}+\frac{1}{u}\frac{\partial u}{\partial r}\right) p_\xi - \left(\frac{1}{2u}\frac{\partial u}{\partial \xi}\right) p_\xi^2 + \frac{1}{2uf}\frac{\partial v}{\partial\xi}   ~ ,
   \end{eqnarray}
with the constraint \eqref{eq:constraintsinteg}.  We can now state the result of the local existence and the uniqueness of \eqref{ReducedEinsteineqSkyrmionEq1} as follows.
\begin{lemma}
	\label{localLipshitz}
The operator $\mathcal{J}$ defined in \eqref{fungsiJ} is locally Lipschitz with respect to $\bf{w}$.	
\end{lemma}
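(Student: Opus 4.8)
**The plan is to show that $\mathcal{J}$ is locally Lipschitz by establishing the Lipschitz property for each of its two components.**

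The first component is simply $p_\xi$, which is trivially Lipschitz with respect to $\mathbf{w} = (\xi, p_\xi)$ since it is a coordinate projection. So the real work is entirely in the second component $J_\xi$. First I would observe that $J_\xi$ is built from finitely many terms, so by the triangle inequality it suffices to prove each term is locally Lipschitz and then combine the Lipschitz constants additively. The key fact I would lean on is the elementary principle already recorded in the excerpt: for any smooth scalar function $\mathcal{F}$, one has locally the mean-value-type bound \eqref{anyFlocal}, together with the standard fact that products and quotients of locally bounded, locally Lipschitz functions are again locally Lipschitz, provided denominators stay bounded away from zero on the compact set $U$.

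Next I would go term by term through $J_\xi$. The functions $u$ and $v$ in \eqref{eq:uv} are smooth (indeed real-analytic, being finite sums of $\sin^{2k}\xi$ times powers of $r$) in $\xi$ for fixed $r$ on $I$; hence $u$, $v$, $\partial_\xi u$, $\partial_r u$, and $\partial_\xi v$ are all smooth in $\xi$, and their values and $\xi$-derivatives are bounded on $U$. This handles the coefficients $\tfrac{1}{u}\partial_r u$, $\tfrac{1}{2u}\partial_\xi u$, and $\tfrac{1}{2uf}\partial_\xi v$ once we control $1/u$ and $1/f$. For $1/u$, I would note that away from the horizon $u \ge \gamma_1 C^{d-1}_0 > 0$ whenever $\gamma_1 > 0$ (and more generally $u$ is bounded below by a positive constant on $U$ because at least one $\gamma_n > 0$), so $1/u$ is smooth and Lipschitz in $\xi$. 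The factors $p_\xi$ and $p_\xi^2$ are polynomial in the $p_\xi$-coordinate and hence locally Lipschitz. The genuinely subtle pieces are $\delta'$, $f'/f$, and $1/(uf)$, since $\delta$ and $m$ (hence $f$) are themselves functions of $\mathbf{w}$ through the constraints.

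The main obstacle will be controlling $\delta'$, $f'/f$, and the factor $1/f$ as functions of $\mathbf{w}$. For $\delta'$ and $f'$ I would substitute the constraint relations \eqref{eq:constraints}, which express $\delta'$ and $m'$ (and therefore $f'$ via \eqref{eq:metform}) as explicit smooth functions of $u$, $v$, $p_\xi$, and $r$; Lemma \ref{lemmalocalLipshitzconstraint} has already shown $\delta$ and $m$ are locally Lipschitz in $\mathbf{w}$, and the same computation gives local Lipschitz control of $\delta'$ and $f'$ directly from the constraint expressions. The division by $f$ is where care is needed: since the interval $I = [r, r+\varepsilon]$ is chosen with $r \ne r_h$, we stay strictly away from the horizon, so $f$ is bounded below by a positive constant $f_{\min} > 0$ on $I$, which makes $1/f$ smooth and bounded and rescues both $f'/f$ and $1/(uf)$. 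Once each term is shown to be a product of locally bounded, locally Lipschitz factors with denominators bounded away from zero, I would assemble the per-term Lipschitz constants into a single bounded positive function $C_{\mathcal{J}}(|\mathbf{w}|, |\tilde{\mathbf{w}}|)$, concluding
\begin{equation}
\left| \mathcal{J}(\mathbf{w}, r) - \mathcal{J}(\tilde{\mathbf{w}}, r) \right|_U \le C_{\mathcal{J}}(|\mathbf{w}|, |\tilde{\mathbf{w}}|) \, |\mathbf{w} - \tilde{\mathbf{w}}| ~ ,
\end{equation}
which is the desired local Lipschitz estimate on $U \subset \lR^2$.
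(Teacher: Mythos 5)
Your proposal is correct and follows essentially the same route as the paper's proof: split off the trivial component $p_\xi$, estimate $J_\xi$ term by term via the triangle inequality, and close each term using the mean-value bound \eqref{anyFlocal} together with the Lipschitz control of $\delta$ and $m$ from Lemma \ref{lemmalocalLipshitzconstraint}; you merely spell out the details (lower bounds on $u$ and $f$, constraint-based control of $\delta'$ and $f'/f$) that the paper compresses into ``employing some computations''. The only slip is your parenthetical claim that $u$ stays bounded below whenever \emph{some} $\gamma_n>0$: if $\gamma_1=0$ and only some $\gamma_n>0$ with $n\ge 2$, then $u\propto \sin^{2(n-1)}\xi$ vanishes at $\sin\xi=0$, so the positive lower bound genuinely requires $\gamma_1>0$ (the assumption the paper itself imposes later for global existence) --- but since your main argument invokes exactly that case, the proof stands.
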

\begin{proof}
From \eqref{fungsiJ}, we obtain the following estimate
\begin{eqnarray}
\left|  J_\xi \right|_U  \le     \left| \frac{d-1}{r}+\delta'+\frac{f'}{f}+\frac{1}{u}\frac{\partial u}{\partial r}\right| |p_\xi | + \left| \frac{1}{2u}\frac{\partial u}{\partial \xi}\right| |p_\xi |^2 + \left| \frac{1}{2uf}\frac{\partial v}{\partial\xi} \right|  ~ .
\end{eqnarray}
Since   $\xi(r)$ belongs  at least to a class of $C^2$-real functions, then its values is bounded on any closed interval  $I$. Thus, $\left|  \mathcal{J}( {\bf w}, r) \right|_U$ is bounded on $U$.

Moreover, for ${\bf{w}}, \tilde{\bf{w}} \in U$, we also have
\begin{eqnarray}
\left|  \mathcal{J}( {\bf w}, r) -  \mathcal{J}( \tilde{ {\bf w}}, r) \right|_U & \le &  \frac{d-1}{r}  |p_\xi - \tilde{p}_\xi |  +  \left| \delta' ( {\bf w}, r) p_\xi -  \delta' ( \tilde{ {\bf w}}, r) \tilde{p}_\xi \right| \nonumber\\
&& + \left|  \frac{f'}{f} ( {\bf w}, r) p_\xi -   \frac{f'}{f} ( \tilde{ {\bf w}}, r) \tilde{p}_\xi \right|  + \left| \frac{1}{u}\frac{\partial u}{\partial r}( {\bf w}, r) p_\xi - \frac{1}{u}\frac{\partial u}{\partial r}( \tilde{ {\bf w}}, r)  \tilde{p}_\xi \right|  \nonumber\\
&& + \left| \frac{1}{2u}\frac{\partial u}{\partial \xi} ( {\bf w}, r) p_\xi^2 - \frac{1}{2u}\frac{\partial u}{\partial \xi}( \tilde{ {\bf w}}, r)  \tilde{p}_\xi^2 \right|  + \left| \frac{1}{2uf}\frac{\partial v}{\partial\xi}( {\bf w}, r) -  \frac{1}{2uf}\frac{\partial v}{\partial\xi}( \tilde{ {\bf w}}, r)  \right|        ~ .  \nonumber\\
\end{eqnarray}
Employing some computations using the result in Lemma \ref{lemmalocalLipshitzconstraint} and the local property \eqref{anyFlocal} on $U$, it can be shown $ \mathcal{J}$ is locally Lipshitz with respect to $\bf{u}$ satisfying
\begin{equation}
\left|  \mathcal{J}({\bf w}, r) -  \mathcal{J}( \tilde{{\bf w}}, r) \right|_U  \le C_{ \mathcal{J}}(|\bf{w}|, |\tilde{\bf{w}}|) | \bf{w} - \tilde{\bf{w}}| ~ .  \label{localLipshitzcon}
\end{equation}
\end{proof}
%%%%%
%%%%%%

It is useful for the next analysis to  write down  \eqref{ReducedEinsteineqSkyrmionEq1} in the integral form 
\begin{equation}
{\bf{w} }(r) = {\bf{w} }(r_h) + \int_{r_h}^{r}\:\mathcal{J}\left( {\bf{w} }(s), s \right)\:ds ~  . \label{IntegralEquation}
\end{equation}
By introducing a Banach space
\begin{equation}
{\mathfrak X} \equiv \{ {\bf{w} } \in C(I,\lR^2) : \: {\bf{w} }(r_h) = {\bf{w} }_{0}, \: \sup_{r\in I}| {\bf{w} }(r)|\leq L_0 \}  ~ ,
\end{equation}
endowed with the norm
\begin{equation}
|{\bf{w} }|_{\mathfrak X} = \sup_{r\in I}\:|\mathbf{w}(r)| ~ ,
\end{equation}
where $L_0$ is a positive constant, we define an operator $\mathcal{K}$ 
\begin{equation}
\mathcal{K}(\mathbf{w}(r)) = \mathbf{w}_{0} + \int_{r_h}^{r}  ds ~  \mathcal{J}\left(s,\mathbf{w}(s)\right)\:. \label{OpKdefinition}
\end{equation}
Then, Lemma \ref{localLipshitz} implies the uniqueness result proving that the  differential equation (\ref{ReducedEinsteineqSkyrmionEq1}) has a unique local solution. 
\begin{corollary}{\textnormal{\cite{Akbar_2015}}}
\label{unigueness}
The operator $\mathcal{K}$  defined in  (\ref{OpKdefinition})  is a mapping from ${\mathfrak X} $ to itself and $\mathcal{K}$ is a contraction mapping on $I = [r,r + \varepsilon ]$ with $r \ne r_h$, $\varepsilon > 0$ ,and
\begin{equation}
\varepsilon   \leq \min\left(\frac{1}{C_{L_0}},\frac{1}{C_{L_0} L_0 + \|\mathcal{J}(r)\|}\right) ~ .
\end{equation}
Then, the operator $\mathcal{K}$ is a contraction mapping on ${\mathfrak X}$.
\end{corollary}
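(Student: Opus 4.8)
The plan is to verify directly the two defining properties of a contraction on the closed set $\mathfrak{X}$ --- that $\mathcal{K}$ maps $\mathfrak{X}$ into itself and that it is distance-decreasing by a factor strictly less than one --- and then to extract the admissible range of $\varepsilon$ by matching these two requirements to the two entries of the minimum. Everything rests on the local Lipschitz estimate \eqref{localLipshitzcon} of Lemma \ref{localLipshitz}, together with the elementary bound $\bigl|\int_{r_h}^{r} g(s)\,ds\bigr| \le \varepsilon \sup_{s\in I}|g(s)|$ valid on the short interval $I$ of length $\varepsilon$.

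First I would settle the self-mapping property. If $\mathbf{w}\in\mathfrak{X}$, then $s\mapsto\mathcal{J}(\mathbf{w}(s),s)$ is continuous, so $\mathcal{K}(\mathbf{w})$ is $C^1$ and hence lies in $C(I,\lR^2)$; evaluating \eqref{OpKdefinition} at the base point returns $\mathbf{w}_0$, so the boundary datum is preserved. The substantive point is the sup-norm bound: from \eqref{OpKdefinition} we have $|\mathcal{K}(\mathbf{w})(r)-\mathbf{w}_0| \le \int_{r_h}^{r}|\mathcal{J}(\mathbf{w}(s),s)|\,ds$, and splitting $|\mathcal{J}(\mathbf{w}(s),s)| \le |\mathcal{J}(\mathbf{w}(s),s)-\mathcal{J}(\mathbf{w}_0,s)| + \|\mathcal{J}(r)\|$ and applying \eqref{localLipshitzcon} (with Lipschitz constant $C_{L_0}$ on the ball of radius $L_0$) gives $|\mathcal{K}(\mathbf{w})(r)-\mathbf{w}_0| \le \varepsilon\bigl(C_{L_0}L_0 + \|\mathcal{J}(r)\|\bigr)$. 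Demanding that this stay within the radius $L_0$ produces exactly the second entry of the minimum, so $\mathcal{K}(\mathbf{w})\in\mathfrak{X}$.

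Next I would prove the contraction bound. For $\mathbf{w},\tilde{\mathbf{w}}\in\mathfrak{X}$ the constant terms cancel, leaving $\mathcal{K}(\mathbf{w})(r)-\mathcal{K}(\tilde{\mathbf{w}})(r)=\int_{r_h}^{r}\bigl[\mathcal{J}(\mathbf{w}(s),s)-\mathcal{J}(\tilde{\mathbf{w}}(s),s)\bigr]\,ds$; applying \eqref{localLipshitzcon} pointwise and then the interval-length bound yields $|\mathcal{K}(\mathbf{w})(r)-\mathcal{K}(\tilde{\mathbf{w}})(r)| \le \varepsilon\, C_{L_0}\, |\mathbf{w}-\tilde{\mathbf{w}}|_{\mathfrak{X}}$. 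Taking the supremum over $r\in I$ gives $|\mathcal{K}(\mathbf{w})-\mathcal{K}(\tilde{\mathbf{w}})|_{\mathfrak{X}} \le \varepsilon\, C_{L_0}\, |\mathbf{w}-\tilde{\mathbf{w}}|_{\mathfrak{X}}$, and the first entry $\varepsilon\le 1/C_{L_0}$ makes the contraction factor $\varepsilon C_{L_0}$ strictly less than one. Since $\mathfrak{X}$ is a closed subset of the Banach space $\bigl(C(I,\lR^2),|\cdot|_{\mathfrak{X}}\bigr)$ and therefore complete, the Banach fixed point theorem then yields the unique local solution of \eqref{IntegralEquation}, equivalently of \eqref{ReducedEinsteineqSkyrmionEq1}.

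The step demanding the most care is the self-mapping bound, not for its algebra but for the singular structure of $\mathcal{J}$: the terms $f'/f$ and $(2uf)^{-1}\partial_\xi v$ in $J_\xi$ diverge where $f=0$, that is, at the horizon $r_h$. This is precisely why $I$ is taken as $[r,r+\varepsilon]$ with $r\ne r_h$, which keeps $f$ bounded away from zero so that the constants $C_{L_0}$ and $\|\mathcal{J}(r)\|$ inherited from Lemma \ref{localLipshitz} are genuinely finite; one must check that these bounds hold uniformly on the chosen closed interval. The remaining manipulations are the routine Picard--Lindel\"of estimates.
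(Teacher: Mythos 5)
Your proposal is correct and follows essentially the route the paper itself intends: the paper gives no in-text proof of this corollary, deferring entirely to the cited reference \cite{Akbar_2015}, and the argument there is precisely the standard Picard--Lindel\"of/Banach fixed-point scheme you execute --- the self-mapping estimate and the Lipschitz-based contraction estimate on a short interval, both resting on Lemma \ref{localLipshitz}, followed by completeness of $\mathfrak{X}$. One caveat, which traces to the corollary's own imprecise statement rather than to your reasoning: demanding $\varepsilon\left(C_{L_0}L_0+\|\mathcal{J}(r)\|\right)\le L_0$ actually yields $\varepsilon\le L_0/\left(C_{L_0}L_0+\|\mathcal{J}(r)\|\right)$, not literally the stated second entry $1/\left(C_{L_0}L_0+\|\mathcal{J}(r)\|\right)$ (and the non-strict bound $\varepsilon\le 1/C_{L_0}$ gives a contraction factor $\le 1$ rather than $<1$), so your claim that the demands reproduce the minimum ``exactly'' is a slight overstatement of what the computation delivers.
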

%%%%

Finally,  a maximal solution can be constructed as follows. Let ${\bf w}(r)$ be a solution defined on the interval $( r_h, r_m)$ with $r_m > 0$. Then, we repeat the above arguments of the local existence  with the initial condition ${\bf w}(r-r_0)$ for some $r_h < r_0 < r$ and use the uniqueness condition to glue the solutions to get the maximal solution. It is straightforward that  we have a global solution by taking $r_m \to +\infty$.

%%%%%%%%%%%%%%%
%%%%%%%%%%%%%%%
\subsection{Global Existence}
\label{subsec:GlobExis}

The second part of this section we show  that  a regular global solution of \eqref{ReducedEinsteineqSkyrmionEq1} on  $I_{\infty}\equiv [r_h, +\infty)$ does exist satisfying the expansions \eqref{eq:Expandasymp} and \eqref{eq:fasympcon}. As expected, these  establish the finiteness of  \eqref{IntegralEquation} on $I_{\infty}$. 

First of all, we introduce  two intervals, namely, $I_L \equiv [r_h + \varepsilon, L]$ with $\varepsilon > 0$ and $I_A \equiv (L, +\infty)$ for finite and large $L \gg r_h$. On $I_A$, all the functions $\delta(r)$, $m(r)$, and $f(r)$ can be expanded as in  \eqref{eq:Expandasymp},   and \eqref{eq:fasympcon}. Equation  \eqref{IntegralEquation}  can be written down as
\begin{equation}
{\bf{w} }(L) = {\bf{w} }(r_h) + \int_{r_h + \varepsilon}^{L}\:\mathcal{J}\left( {\bf{w} }(s), s \right) ~ ds ~ +  \int_{L}^{+\infty}\:\mathcal{J}\left( {\bf{w} }(s), s \right) ~ ds . \label{IntegralEquation1}
\end{equation}
In order to suppress the third term in the right hand side in  \eqref{IntegralEquation1}, we should set: 1. $\gamma_1 > 0$ which means that the kinetic term must be non-zero, 2. $\partial_\xi V(\xi_\infty)$ must be well-defined. The latter condition follows from the fact that the scalar potential  $V$ should at least a $C^2$-real function as stated in Lemma \ref{localLipshitz} and it has a value given in \eqref{eq:eomSkyrmeasymcon}. Then, we have a finite and globally well-defined solution of  \eqref{IntegralEquation} since the function  $\xi(r)$ is at least a $C^2$-function.

Next, we want to prove the finite energy solutions of   \eqref{IntegralEquation} by analyzing the estimate of   the energy functional \eqref{eq:EeffSkyrme}. Then, using the expansions \eqref{eq:Expandasymp}  and \eqref{eq:fasympcon}, and taking first $\Lambda < 0$, we obtain an inequality
\begin{eqnarray} 
E & \le &  A(S^{d-1})  R_0^{d-1} \sup_{r \in I_L} \left| \int_{r_h}^L e^\delta r^{d-1} \left( u f(\xi')^2+v \right)  dr\right| \nonumber \\
 && + ~  \frac{A(S^{d-1})}{d(d-1)}  2 \gamma_1 |\Lambda | n_3^2  \tilde{\xi}^2_1 R_0^{d-1}  \left| \int_{L}^{+\infty}  \frac{ dr }{r^{2n_3 - d+1}}  \right|  \nonumber \\
 && +  A(S^{d-1})  R_0^{2d-2} \left| \int_{L}^{+\infty}  r^{d-1}   \left( - \gamma_0 V \left(\xi_{+\infty} \right) + \sum_{n=1}^d \gamma_n C^{d}_{n} \left(1-\frac{n}{d}\right) \frac{ \sin^{2n}\xi_{+\infty}}{(rR_0)^{2n}}  \right)  dr\right| ~  .  \nonumber \\
 \label{Estaticineq} 
\end{eqnarray} 
 The first term in the right hand side of (\ref{Estaticineq}) is  finite since all $C^2$-functions, namely, $f(r)$ and $\xi(r)$ are bounded on the closed interval $I_L$.  In order to control the second and the third terms in (\ref{Estaticineq})  on the open interval $I_A$,   one has to set  $\xi_{+\infty}$, the scalar potential $V \left(\xi_{+\infty} \right)$, and  the order $n_3$ in  \eqref{eq:fasympcon} to be  $\xi_{\infty} = 0$, $V \left(\xi_{\infty} \right) = 0$, and $n_3 > \frac{d}{2} $ for finite $\tilde{\xi}_1$, respectively. In the case of $\Lambda = 0$, we just replace $\Lambda$ by $1/R_0^2$ in \eqref{Estaticineq}. One regains the same results. Comparing this energy estimate analysis with the result \eqref{eq:nsympconres}, we conclude
\begin{eqnarray} 
\xi_{\infty} = 0 ~ , \quad V \left(\xi_{\infty} \right) = 0 ~ , \quad n_3 \ge \frac{1}{2} (d+1) ~ . \label{eq:fasympcon0}
\end{eqnarray}

Moreover, in order to have a consistent result, since we have the Einstein's field equation, one has to check the estimate
\begin{eqnarray} 
\label{G00} 
&&   \int_{r_h}^{+\infty} e^\delta r^{d-1} \left( R^0_{~ 0} - \frac{1}{2} \delta^0_{~ 0}  R +  \delta^0_{~ 0} \Lambda \right) dr \nonumber\\
  && =\frac{1}{2}  \int_{r_h}^{+\infty} e^\delta r^{d-1}  \left( (d-1)\left( \frac{f'}{r}+\frac{(d-2)}{r^2} f \right)-\frac{(d-1)(d-2)}{ r^2 R^2_0} + 2 \Lambda  \right) dr ~ . \nonumber\\
\end{eqnarray}
Applying the expansions \eqref{eq:Expandasymp}  and \eqref{eq:fasympcon}, and repeating similar steps as in \eqref{Estaticineq}, we obtain that the integral \eqref{G00} is finite. In other words, the finite energy black holes do exist.

\indent Thus, we could state 
\begin{theorem}
\label{MainresultExistence}
Suppose ${\bf{w} }(r)$ be a solution of \eqref{ReducedEinsteineqSkyrmionEq1} with the initial value $\mathbf{w}_h$ and  $\Lambda \le 0$. Then, we have a family of global solutions with finite energy satisfying \eqref{eq:horizonexpand}, \eqref{eq:Expandasymp},  \eqref{eq:fasympcon},  \eqref{eq:nsympconres}, and \eqref{eq:fasympcon0}  that  connects  two boundaries, namely the horizon and the asymptotic regions.
\end{theorem}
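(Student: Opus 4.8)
The plan is to assemble the three ingredients already in place---local existence and uniqueness, extension to a maximal solution, and control of the asymptotic tail---into a single global statement. First I would fix the initial data at the horizon: the expansion \eqref{eq:horizonexpand} together with the relations \eqref{eq:horizonexpandres} determines $\mathbf{w}_h = (\xi_h, \xi_1)$ consistently, so that a Cauchy problem for \eqref{ReducedEinsteineqSkyrmionEq1} is well posed just outside $r_h$. Because the coefficients of $\mathcal{J}$ in \eqref{fungsiJ} contain the singular factor $f'/f$ at $r = r_h$, where $f$ vanishes, I would not start the Picard iteration at $r_h$ itself but on an interval $I = [r_h + \varepsilon', r_h + \varepsilon' + \varepsilon]$ with $\varepsilon' > 0$ small, using the horizon expansion to supply bounded data there; Corollary \ref{unigueness} then yields a unique local solution via the contraction mapping $\mathcal{K}$.

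Next I would extend this local solution to a maximal one. Invoking Lemma \ref{localLipshitz}, the uniqueness of the local solution lets me glue overlapping local pieces: starting from the endpoint of each interval $I$ and re-applying Corollary \ref{unigueness}, I obtain a solution on a maximal interval $(r_h, r_m)$. To conclude $r_m = +\infty$ I would run the standard continuation argument---if $r_m$ were finite while $\mathbf{w}(r)$ remained bounded on $(r_h, r_m)$, the local existence result could be reapplied at $r_m$ to extend the solution, contradicting maximality. The a priori bound needed here is furnished by the boundedness estimates of Lemma \ref{lemmalocalLipshitzconstraint} and Lemma \ref{localLipshitz}, together with the monotone behavior of $m(r)$ and $\delta(r)$ enforced by \eqref{eq:constraints}, so that $\mathbf{w}$ cannot escape in finite $r$.

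Having produced a global solution on $I_\infty = [r_h, +\infty)$, I would then verify that it realizes the claimed boundary data. Splitting $I_\infty = I_L \cup I_A$ as in the Global Existence subsection, the behavior on $I_L$ is governed by the $C^2$ regularity already guaranteed, while on $I_A$ the decay conditions \eqref{eq:nsympconres} and \eqref{eq:fasympcon0} force the tail of \eqref{IntegralEquation} to converge: setting $\gamma_1 > 0$ and requiring $\partial_\xi V(\xi_\infty)$ to be well-defined, as in \eqref{eq:eomSkyrmeasymcon}, suppresses the third integral in \eqref{IntegralEquation1}. Finiteness of the energy is then read off from the estimate \eqref{Estaticineq}: the $I_L$-integral is finite by compactness, and the two $I_A$-integrals converge precisely when $\xi_\infty = 0$, $V(\xi_\infty) = 0$, and $n_3 \ge \tfrac{1}{2}(d+1)$, uniformly for $\Lambda < 0$ and, after replacing $\Lambda$ by $1/R_0^2$, for $\Lambda = 0$. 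Finally, the consistency check \eqref{G00} confirms that these same decay rates make the Einstein constraint integrable, so the finite-energy black hole indeed connects the horizon and asymptotic regions.

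The step I expect to be the main obstacle is the global continuation, that is, ruling out finite-$r$ blow-up of $\mathbf{w}$: the contraction estimate only yields a local interval whose length $\varepsilon$ depends on $L_0$ through Corollary \ref{unigueness}, so a genuinely rigorous argument must produce an a priori bound on $|\mathbf{w}(r)|$ independent of the gluing step---most naturally by exploiting the sign $\Lambda \le 0$ and the monotonicity in \eqref{eq:constraints} to prevent $\varepsilon$ from shrinking to zero before $r_m$ is reached.
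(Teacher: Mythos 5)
Your proposal is correct and takes essentially the same route as the paper: contraction mapping (Corollary \ref{unigueness}) on intervals avoiding $r_h$, gluing by uniqueness to a maximal solution, then the $I_L \cup I_A$ splitting with $\gamma_1 > 0$, $\xi_\infty = 0$, $V(\xi_\infty) = 0$ and $n_3 \ge \tfrac{1}{2}(d+1)$ to suppress the tail of \eqref{IntegralEquation1}, bound the energy via \eqref{Estaticineq}, and check \eqref{G00}. The one obstacle you flag---an a priori bound on $|\mathbf{w}(r)|$ ruling out finite-$r$ blow-up so that the gluing actually reaches $r_m \to +\infty$---is genuine, but the paper does not supply it either; it simply asserts that this step is ``straightforward,'' so your reconstruction matches the paper's argument gap for gap.
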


\section{Linear Dynamical Stability Analysis}
\label{sec:linearstableanal}

In this section, we will discuss the linear dynamical stability analysis of the models using similar method as in \cite{Shiiki:2005aq}. First, we take in advance the metric functions  in the preceding sections to be time dependent, namely,  $\delta_t \equiv \delta(r, t)$, $m_t \equiv m(r, t)$, and $\xi_t \equiv \xi(r, t)$. Then, we expand them around $\delta(r)$, $m(r)$, and $\xi(r)$ where $\delta(r)$, $m(r)$, and $\xi(r)$ are the static solutions of \eqref{eq:eomSkyrme}, \eqref{eq:Einsteineq2}, and \eqref{eq:constraints}.

Let us write down the energy functional of the models in which we have $\delta(r, t)$, $m(r, t)$, and $\xi(r, t)$ 
\begin{equation}\label{eq:EeffSkyrmetdep}
    E(t) =\int d^{d}x \sqrt{-g^{(d+1)}} \left( u_t \left(\frac{(\dot{\xi}_t)^2}{e^{2\delta_t}f_t} + f_t  (\xi'_t)^2 \right) +v_t \right) ~ ,
\end{equation}
where $u_t$ and $v_t$ have the same form as defined in \eqref{eq:uv} but we replace $\xi$ by $\xi_t$. The variation of  \eqref{eq:EeffSkyrmetdep} with respect to $\xi_t$ gives  the Skyrmion equation of motions 
\begin{eqnarray}
 \left(\xi''_t -\frac{\ddot{\xi}_t}{e^{2\delta_t}f_t^2}\right)+\left(\frac{\delta'_t f_t + f'_t}{f_t}+\frac{d-1}{r}+\frac{1}{u_t}\frac{\partial u_t}{\partial r}\right)\xi'_t +\left(\frac{\dot{\delta}_t f_t +\dot{f}_t}{e^{2\delta_t}f_t^3}\right)\dot{\xi}_t \nonumber\\
 +\left(\frac{1}{2u_t}\frac{\partial u_t}{\partial \xi_t}\right)\left( (\xi'_t)^2-\frac{(\dot{\xi}_t)^2}{e^{2\delta_t}f_t^2}\right) = \frac{1}{2u_t f_t}\frac{\partial v_t}{\partial\xi_t} ~. \label{eq:eomSkyrmet}
\end{eqnarray}

In this case, the components of the Einstein field equations in \eqref{eq:Einsteineq}-\eqref{eq:Einsteineq2} are  modified, that is,
\begin{equation}\label{eq:Einsteineqt}
    (d-1)\frac{f_t}{2}\left( \frac{f'_t}{rf_t}+\frac{d-2}{r^2}\right) -\frac{(d-1)(d-2)}{2(rR_0)^2}+\Lambda  = - u_t  \left(\frac{(\dot{\xi}_t)^2}{e^{2\delta_t}f_t}+ f_t (\xi'_t)^2\right) - v_t    ~ ,   
\end{equation}
\begin{equation}\label{eq:Einsteineq1t}
 (d-1)\frac{f_t}{2}\left[\frac{2\delta'_t f_t +f'_t}{rf_t}+\frac{d-2}{r^2}\right]-\frac{(d-1)(d-2)}{2(rR_0)^2}+\Lambda  =  u_t  \left(\frac{(\dot{\xi}_t)^2}{e^{2\delta_t}f_t}+ f_t (\xi'_t)^2\right) - v_t ~ ,
\end{equation}
\begin{eqnarray}
    &&\frac{e^{-2\delta_t}}{2f_t^3}\left( f_t \ddot{f}_t - 2(\dot{f}_t)^2- f_t \dot{f}_t \dot{\delta}_t \right) +\frac{f_t}{2} \left( 2\delta''_t +2(\delta'_t)^2+\frac{f''_t +3\delta'_t f'_t}{f_t}\right) \nonumber\\
    && +\frac{d-2}{r}\left(f'_t+\delta'_t f_t \right)-\frac{(d-2)(d-3)}{2(rR_0)^2}\left(1-f_t R_0^2\right)+\Lambda\nonumber\\
     &&  \quad  = ~  \frac{1}{d (d-1)} \sum_{n=1}^d\gamma_n C^{d}_{n} ~ n \left(2n-d-1\right) \frac{\sin^{2(n-1)}\xi_t}{(rR_0)^{2(n-1)}}   \left(-\frac{(\dot{\xi}_t)^2}{e^{2\delta_t}f_t}+ f_t (\xi'_t)^2\right)  \nonumber\\
 && \quad \quad - \frac{1}{d (d-1)} \sum_{n=1}^d\gamma_n C^{d}_{n} \left( 2n^2-3nd+n-d+d^2\right) \frac{\sin^{2n}\xi_t}{(rR_0)^{2n}}  - \gamma_0 V  ~ . \label{eq:Einsteineq2t}
\end{eqnarray}
Equations  \eqref{eq:Einsteineqt} and  \eqref{eq:Einsteineq1t} can further be simplified  into
\begin{eqnarray}
    m'_t &=& \frac{(d-2)}{(d-1)}r^{d-1}\left( u_t \left(\frac{(\dot{\xi}_t)^2}{e^{2\delta_t}f_t}+f_t (\xi'_t)^2\right)+ v_t \right) ~ , \nonumber\\
    \delta'_t &=& \frac{2u_t r}{(d-1)}\left(\frac{(\dot{\xi}_t)^2}{e^{2\delta_t}f_t^2}+(\xi'_t)^2\right)  ~ .\label{eq:constraintst}
\end{eqnarray}
Then,  there are small fluctuation  of $\delta_t(r, t)$, $m_t(r, t)$, and $\xi_t(r, t)$ around the static classical solutions such that we have the expansion 
\begin{eqnarray}
m_t(r,t) &=& m(r)+\varepsilon ~ m_l(r,t)  ~ , \nonumber\\ 
   \delta_t(r,t) &=& \delta(r)+\varepsilon ~ \delta_l(r,t) ~ , \nonumber\\ 
   \xi_t(r,t) &=& \xi(r)+\varepsilon ~ \xi_l(r,t) ~ ,  \label{eq:expandstatsol}
\end{eqnarray}
where  \(\varepsilon > 0\) is a small  parameter. First, substituting the first and the second equations in \eqref{eq:expandstatsol} to \eqref{eq:constraintst} and taking the first order approximation on \(\varepsilon\), it yields
\begin{eqnarray}
m_l &=& \frac{d-2}{d-1}2r^{d-1}f e^{\delta}\xi'  u ~ \xi_l ~ , \nonumber\\
 \delta'_l  &=& \frac{2r}{d-1}\left( (\xi')^2  \frac{\partial u}{\partial \xi}  \xi_l  + 2  u \xi' ~ \xi'_l \right) ~ , \label{eq:expandstatsolsol}
\end{eqnarray}
where $u \equiv u_t(\xi)$, $ \frac{\partial u}{\partial \xi}  \equiv  \frac{\partial u_t}{\partial \xi}(\xi)$, and we have used \eqref{eq:eomSkyrme} in the computation to get the first equation in \eqref{eq:expandstatsolsol}.

Again,  substituting the last equation in \eqref{eq:expandstatsol} to \eqref{eq:eomSkyrmet}  and employing the computation up to the first order of \(\varepsilon\), we get 
\begin{equation}
    \frac{ u}{e^{\delta}f}\ddot{\xi}_l =\frac{1}{r^{d-1}}\left( u f r^{d-1}e^{\delta}\xi_l'\right)'+K(\xi, m,\delta, r)\xi_l ~ , \label{eq:eomSkyrmetorder1}
\end{equation}
with 
\begin{eqnarray}
    K(\xi,m,\delta,r) &\equiv& \frac{4}{d-1}r e^{\delta} (\xi')^3   \frac{\partial u}{\partial\xi}   u f +\frac{1}{r^{d-1}}\left(r^{d-1}e^{\delta}f \xi' \left(\frac{\partial u}{\partial\xi}   -\frac{4}{d-1} u^2  \right)\right)'\nonumber\\
    &&-\frac{e^{\delta}}{2}\left(f  \frac{\partial^2 u}{\partial\xi^2} ( \xi')^2 +  \frac{\partial^2 v}{\partial\xi^2}   \right) ~ .
\end{eqnarray}
Taking
\begin{equation}
    \xi_l (r, t) =  \left(u f e^{\delta} r^{d-1} \right)^{-1/2} \psi(r) e^{i\omega t}~ ,
\end{equation}
we can cast  \eqref{eq:eomSkyrmetorder1} into Sturm-Liouville equation
\begin{equation}
   \psi'' + \left(-\frac{1}{2}  \left(u f e^{\delta} r^{d-1} \right)''+\frac{ \left( (u f e^{\delta} r^{d-1})' \right)^2}{4 u f e^{\delta} r^{d-1} } + r^{d-1} \frac{ e^{\delta}f }{u}  K(\xi,m,\delta,r) + \omega^2\right) \psi = 0  ~ . \label{eq:eomSkyrmetorder1lg}
\end{equation}
The solution $\psi(r)$ is said to be stable if the eigenvalue $\omega^2 > 0$ and  $\psi(r) > 0$ for $r_h < r < +\infty$. However, to have unstable solutions,  it is sufficient to show that there exists an eigenvalue with $\omega^2<0$. In other words, we have stability properties of the solutions  in the context of Sturm-Liouville theory.

It is straightforward to show that there exists a unique local solution of \eqref{eq:eomSkyrmetorder1lg} in an arbitrary interval $I \subset I_\infty$ using the same procedure as in subsection \ref{subsec:LocExisSmooth}. Then, we apply the uniqueness condition to get a global solution in  $  I_\infty$ by gluing all of these local solutions.

Let us first discuss the behavior of \eqref{eq:eomSkyrmetorder1lg} in the asymptotic region where  $r \in I_A $.  First of all, we take the case of $\Lambda < 0$. Employing the expansion \eqref{eq:Expandasymp}  and \eqref{eq:fasympcon}, eq. \eqref{eq:eomSkyrmetorder1lg} can be simplified to
\begin{equation}
   \psi'' +  \frac{\Lambda \gamma_0}{d(d-1) \gamma_1} \frac{\partial^2 V}{\partial \xi^2}(\xi_\infty)    r^{d+1}   \psi = 0  ~ , \label{eq:eomSkyrmetorder1lgasymp}
\end{equation}
%%%%%%alternatif%%%%%%%%%%%%%%%%%%%%%%%%
%\begin{equation}
%   \psi'' + \left( \frac{\Lambda \gamma_1}{(d-1)} (2d-1) (d+1)    r^{d-1} +  \frac{\Lambda \gamma_0}{d(d-1) \gamma_1} \frac{\partial^2 V}{\partial \xi^2}(\xi_\infty)    r^{d+1}  + \frac{16 \Lambda^2
% \tilde{\xi}_1 \gamma_1}{d^2 (d-1)^2} + \omega^2\right) \psi = 0  ~ . \label{eq:eomSkyrmetorder1lgasympalt}
%\end{equation}
%%%%%%%%%%%%%%%%%%%%%%%%%%%%%%%%%%%%%%
where we have set $n_3 = d-1$ implying $d \ge 3$. At the lowest order of the expansion, we could have either $\omega^2 \in \lR$ for $\tilde{\xi}_1  \in \lR$, or
\begin{equation}
 \omega^2  = - \frac{16 \Lambda^2 \tilde{\xi}_1 \gamma_1}{d^2 (d-1)^2}  ~ , \label{eq:omega}
\end{equation}
 for $\tilde{\xi}_1  \in \lR$. Assuming 
\begin{equation}
  \frac{\partial^2 V}{\partial \xi^2}(\xi_\infty) > 0   ~ , \label{eq:locminVasymp}
\end{equation}
 the solution of \eqref{eq:eomSkyrmetorder1lgasymp} has the form
\begin{equation}
   \psi(r, \Lambda) = A_0 \left(\frac{d+3}{k_\Lambda^{1/2} r^{\frac{d+1}{2}} } \right)^{1/2} {\mathrm{exp}}\left(-\frac{2 k_\Lambda^{1/2}}{d+3}  r^{\frac{d+3}{2}} \right) \left(1 + O\left( r^{-\frac{d+3}{2}} \right) \right)  ~ , \label{eq:soleomSkyrmetorder1lgasymp}
\end{equation}
which is the asymptotic form of the modified Bessel of the second kind, where
\begin{equation}
  k_\Lambda \equiv  \frac{|\Lambda | \gamma_0}{d(d-1) \gamma_1} \frac{\partial^2 V}{\partial \xi^2}(\xi_\infty)   ~ ,
\end{equation}
and $A_0 > 0$. It is easy to see that  $\psi(r)$ in \eqref{eq:soleomSkyrmetorder1lgasymp} is a positive definite function on $I_A$. In the $\Lambda = 0$ case, we have a similar equation as \eqref{eq:eomSkyrmetorder1lgasymp} with $\omega^2 \in \lR$ whose solution is given by
\begin{equation}
   \psi(r,0) = A_0 \left(\frac{d+1}{k_0^{1/2} r^{\frac{d-1}{2}} } \right)^{1/2} {\mathrm{exp}}\left(-\frac{2 k_0^{1/2}}{d+1}  r^{\frac{d+1}{2}} \right) \left(1 + O\left( r^{-\frac{d+1}{2}} \right) \right)  ~ , \label{eq:soleomSkyrmetorder1lgasymp1}
\end{equation}
which has positive value on $I_A$ with
\begin{equation}
  k_0 \equiv  \frac{ \gamma_0}{2 \gamma_1 R_0^2} \frac{\partial^2 V}{\partial \xi^2}(\xi_\infty)   ~ ,
\end{equation}
satisfying \eqref{eq:locminVasymp}.

Next, we show that the existence of positive definite solutions of  \eqref{eq:eomSkyrmetorder1lg} on  $I_L $. In our proof, we use  the Fixed Point Theorem:
\begin{theorem}{\textnormal{\cite{Krasnosel_1964}}}
\label{FixedPoint}
Let us consider ${\mathcal E}$ to be a Banach space in which we have a cone  ${\mathcal K} \subset {\mathcal E}$. Assuming both $\Omega_1$ and $\Omega_2$ to be open subsets of  ${\mathcal E}$ with  $0 \in \Omega_1$ and  $\bar{\Omega}_1 \subset \Omega_2$, and let $\hat{H}$ be a completely continuous operator satisfying
\begin{equation}  
\hat{H} : {\mathcal K} \cap \left( \bar{\Omega}_2 \diagdown \Omega_1 \right) \to {\mathcal K} ~ ,
\end{equation}
such that either
\begin{itemize}
\item[i)] $\lVert \hat{H}\psi \rVert \le \lVert \psi \rVert$, $\psi \in  {\mathcal K} \cap  \Omega_1 $, and  $\lVert \hat{H}\psi \rVert \ge \lVert \psi \rVert$, $\psi \in  {\mathcal K} \cap  \Omega_2 $;
\item[] or
\item[ii)] $\lVert \hat{H}\psi \rVert \ge \lVert \psi \rVert$, $\psi \in  {\mathcal K} \cap  \Omega_1 $, and  $\lVert \hat{H}\psi \rVert \le \lVert \psi \rVert$, $\psi \in  {\mathcal K} \cap  \Omega_2 $.
\end{itemize}
Then, we have a fixed point of $\hat{H}$ in ${\mathcal K} \cap \left( \bar{\Omega}_2 \diagdown \Omega_1 \right)$.
\end{theorem}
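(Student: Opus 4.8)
The statement is the classical Krasnoselskii cone compression--expansion theorem, so the plan is to deduce it from the theory of the fixed point index for completely continuous maps on a cone rather than to rebuild that machinery from scratch. First I would record the harmless reduction: if $\hat{H}$ already has a fixed point on $\mathcal{K}\cap\partial\Omega_1$ or on $\mathcal{K}\cap\partial\Omega_2$ there is nothing left to prove, so I may assume $\hat{H}\psi\neq\psi$ on both boundaries. This is exactly what makes the index $i(\hat{H},\mathcal{K}\cap\Omega_j,\mathcal{K})$ well defined for $j=1,2$. The index itself I would take as given, built from the Leray--Schauder degree together with a Dugundji-type retraction $\rho:\mathcal{E}\to\mathcal{K}$, invoking only its standard normalization, additivity/excision, homotopy invariance, and solution properties.

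The core of the argument is two index computations on the cone, carried out under hypothesis (i). On the inner boundary the assumption $\lVert\hat{H}\psi\rVert\le\lVert\psi\rVert$ for $\psi\in\mathcal{K}\cap\partial\Omega_1$ excludes any relation $\hat{H}\psi=\lambda\psi$ with $\lambda>1$, since such a relation would force $\lVert\hat{H}\psi\rVert=\lambda\lVert\psi\rVert>\lVert\psi\rVert$; by the standard norm-type criterion (no eigenvector with $\lambda\ge1$ on the boundary) this yields $i(\hat{H},\mathcal{K}\cap\Omega_1,\mathcal{K})=1$, the value being read off by a linear homotopy to the zero map. On the outer boundary the assumption $\lVert\hat{H}\psi\rVert\ge\lVert\psi\rVert$ for $\psi\in\mathcal{K}\cap\partial\Omega_2$ excludes $\hat{H}\psi=\lambda\psi$ with $0<\lambda<1$; the corresponding cone-expansion criterion then gives $i(\hat{H},\mathcal{K}\cap\Omega_2,\mathcal{K})=0$.

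With these in hand I would invoke additivity/excision of the index over the annular region $\mathcal{K}\cap(\Omega_2\setminus\bar\Omega_1)$, namely
\[
i\bigl(\hat{H},\mathcal{K}\cap(\Omega_2\setminus\bar\Omega_1),\mathcal{K}\bigr)=i\bigl(\hat{H},\mathcal{K}\cap\Omega_2,\mathcal{K}\bigr)-i\bigl(\hat{H},\mathcal{K}\cap\Omega_1,\mathcal{K}\bigr)=0-1=-1\neq0.
\]
Because a nonzero index forces a zero of $\psi-\hat{H}\psi$ in the indicated set, $\hat{H}$ has a fixed point in $\mathcal{K}\cap(\Omega_2\setminus\bar\Omega_1)\subset\mathcal{K}\cap(\bar\Omega_2\setminus\Omega_1)$, which is the conclusion. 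Hypothesis (ii) is handled by the identical computation with the roles of $\Omega_1$ and $\Omega_2$ interchanged, producing index $1-0=1\neq0$ and hence again a fixed point.

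The main obstacle I would treat most carefully is not the index bookkeeping but the admissibility of the two homotopies, that is, verifying that no fixed point escapes to $\mathcal{K}\cap\partial\Omega_j$ during each deformation; this is precisely where the norm inequalities $\lVert\hat{H}\psi\rVert\le\lVert\psi\rVert$ and $\lVert\hat{H}\psi\rVert\ge\lVert\psi\rVert$ do their work, and it relies on complete continuity of $\hat{H}$ to keep the degree defined along the homotopy. The cone structure enters only through the positivity that makes the retraction $\rho$ and the index theory available in the first place. If one wished to avoid degree theory, an alternative is to extend $\hat{H}$ to $\bar\Omega_2$ by Dugundji, apply Schauder's theorem, and track norms directly; but the index route is cleaner and is the one I would write up.
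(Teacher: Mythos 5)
There is nothing in the paper to compare your argument against: Theorem \ref{FixedPoint} is never proved there. It is quoted verbatim from Krasnoselskii's book \cite{Krasnosel_1964} and used as a black box in the stability analysis of Section \ref{sec:linearstableanal}. Judged on its own merits, your proposal is the standard modern proof (essentially the fixed-point-index argument of Guo--Lakshmikantham, and close in spirit to Krasnoselskii's original rotation-of-vector-fields argument), and its skeleton is correct: reduce to the case of no fixed points on the two boundaries, get $i(\hat H,\mathcal K\cap\Omega_1,\mathcal K)=1$ from the compression inequality (no eigenvectors $\hat H\psi=\lambda\psi$ with $\lambda>1$, so the homotopy $t\hat H$ is admissible), get $i(\hat H,\mathcal K\cap\Omega_2,\mathcal K)=0$ from the expansion inequality, and conclude by additivity/excision that the index on the conical annulus is $-1\neq 0$; case (ii) is symmetric. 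Three points need explicit care in a write-up. First, the paper's statement imposes the norm inequalities on $\mathcal K\cap\Omega_j$, where $\hat H$ is not even defined (its stated domain is $\mathcal K\cap(\bar\Omega_2\setminus\Omega_1)$); the usable and standard hypotheses live on $\mathcal K\cap\partial\Omega_j$, which is what your proof actually uses, so you should note that you are proving this corrected form. Second, $\Omega_1$ and $\Omega_2$ must be bounded for the Leray--Schauder degree, and hence the index, to be defined; the paper omits this hypothesis as well, and your proof silently needs it. Third, your index-zero step is slightly underjustified as phrased: excluding eigenvectors with $0<\lambda<1$ is not by itself the criterion; the standard lemma also requires $\inf_{\mathcal K\cap\partial\Omega_2}\lVert\hat H\psi\rVert>0$ (or, equivalently, one homotopies $\hat H$ to $\hat H+\mu e$ with $e\in\mathcal K\setminus\{0\}$ and $\mu\to\infty$). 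That lower bound is free here, since $0\in\Omega_1\subset\bar\Omega_1\subset\Omega_2$ forces $\lVert\psi\rVert\geq\operatorname{dist}(0,\partial\Omega_2)>0$ on the outer boundary and the expansion inequality transfers this to $\lVert\hat H\psi\rVert$. With these repairs your proposal is a complete and correct proof of the cited theorem.
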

In addition, the logic of some following steps follow closely \cite{Wang_1994}. The first step is to consider  $\psi(r)$ in \eqref{eq:eomSkyrmetorder1lg} in the interval $[r_1, r_2]$ with boundary conditions
\begin{eqnarray}
a_0 \psi(r_1) - a_1 \psi'(r_1) &=& 0 ~ , \nonumber\\
 b_0 \psi(r_2) -  b_1 \psi'(r_2) &=& 0  ~ , \label{eq:boundcond}
\end{eqnarray}
where $a_0, a_1, b_0, b_1 \ge 0$. Suppose we have a solution of \eqref{eq:eomSkyrmetorder1lg}
\begin{equation}  
\psi(r) = \int_{r_1}^{r_2} G(r,s) F(s)\psi(s) ds \equiv \hat{H}\psi(r) ~ ,\quad \psi \in  C[r_1, r_2] ~ , \label{eq:soleomSkyrmetorder1lg}
\end{equation}
where 
\begin{equation}  
F(r) \equiv  -\frac{1}{2}  \left(u f e^{\delta} r^{d-1} \right)''+\frac{ \left( (u f e^{\delta} r^{d-1})' \right)^2}{4 u f e^{\delta} r^{d-1} } + r^{d-1} \frac{ e^{\delta}f }{u}  K(\xi,m,\delta,r) + \omega^2 ~ ,
\end{equation}
and $ G(r,s)$ is the Green's function of
\begin{equation}  
\psi'' = 0 ~ , \label{eq:soleomSkyrmetordertriv}
\end{equation}
with boundary conditions \eqref{eq:boundcond} whose form is given by
\begin{equation} \label{eq:Greenfunc}
\begin{aligned}[b]
G(r,s) = 
&  
\begin{cases}
\frac{1}{\rho} X(r) Y(s) & ; ~ r_1 \le s \le r \le r_2 ~ , \\
 \frac{1}{\rho} X(s) Y(r)  & ; ~  r_1 \le r \le s \le r_2 ~ ,
\end{cases}
\end{aligned} 
\end{equation}
where $\rho > 0$, 
\begin{equation}  
X(r) \equiv  b_0 \left(r_2 -r \right) + b_1  ~ , \quad Y(r) \equiv  a_0 r + a_1 ~ , \quad  [r_1, r_2] ~ .
\end{equation}
Suppose we have a cone ${\mathcal K}$ in $C[r_1, r_2]$ given by
\begin{equation}  
{\mathcal K} \equiv  \left\{ \psi \in C[r_1, r_2] : \psi(r) \ge 0, \quad \min\limits_{[r_{5/4}, r_{7/4}] } \psi(r) \ge C_G \lVert \psi \rVert  \right\} ~ ,
\end{equation}
where $r_{5/4} > r_1$, $r_{7/4} < r_2$, $\lVert \psi \rVert \equiv \sup\limits_{[r_1, r_2]} | \psi(r)| $,
\begin{equation}  
C_G \equiv  \min \left\{ \frac{(r_2-r_{7/4}) b_0 + b_1}{ r_2 (b_0 + b_1)},  \frac{r_1 a_0 + a_1}{ r_2 (a_0 + a_1)}\right\} ~ .
\end{equation}
Since $G(r,s) \le G(s,s)$ for $r_1 \le r, ~ s \le r_2$, if $\psi \in {\mathcal K}$, then we have
\begin{equation}  
 \hat{H}\psi(r) = \int_{r_1}^{r_2} G(r,s) F(s)\psi(s) ds  \le \int_{r_1}^{r_2} G(s,s) F(s)\psi(s) ds   ~ , \label{eq:ineq1}
\end{equation}
implying
\begin{equation}  
\lVert \hat{H}\psi(r) \rVert  \le \int_{r_1}^{r_2} G(s,s) F(s)\psi(s) ds   ~ .
\end{equation}
Moreover, since
\begin{equation}  
G(r,s) \ge C_G ~ G(s,s)   ~ , \quad   [r_{5/4}, r_{7/4}] ~ ,
\end{equation}
it follows
\begin{equation}  
 \min\limits_{ [r_{5/4}, r_{7/4}] } \hat{H}\psi(r) \ge C_G \lVert \hat{H}\psi \rVert  ~ .
\end{equation}
Thus, $\hat{H}{\mathcal K} \subset {\mathcal K}$ which implies that the mapping  $\hat{H}: {\mathcal K} \to {\mathcal K}$ is completely continuous.

Suppose there is a constant $C_1 > 0$ and $0 < \psi \le C_1$ such that
\begin{equation}  
 \int_{r_1}^{r_2} G(s,s) F(s)  ds  \le 1 ~ . \label{eq:ineq2}
\end{equation}
If $\psi \in {\mathcal K}$ and $ \lVert \psi \rVert = C_1$, then from \eqref{eq:ineq1} and \eqref{eq:ineq2}  it follows
\begin{equation}  
 \hat{H}\psi(r)   \le \int_{r_1}^{r_2} G(s,s) F(s)\psi(s) ds \le \lVert \psi \rVert  ~ . \label{eq:ineq3}
\end{equation}
Defining
\begin{equation}  
\Omega_1 \equiv  \left\{ \psi \in{\mathcal E} : \lVert \psi \rVert  < C_1\right\} ~ ,
\end{equation}
\eqref{eq:ineq3} implies
\begin{equation}  
  \lVert \hat{H}\psi \rVert    \le \lVert \psi \rVert  ~ , \quad \psi \in {\mathcal K}\cap \partial\Omega_1 ~ . 
\end{equation}
Next, we also have $C_2 > 0$ and $\psi \ge C_2$ such that 
\begin{equation}  
C_G  \int_{r_{5/4}}^{r_{7/4}} G(r_{3/2},s) F(s)  ds  \ge 1 ~ , \label{eq:ineq4}
\end{equation}
where $r_{5/4}< r_{3/2} < r_{7/4}$. Introducing $C_3 \equiv \max\left\{\frac{C_1}{r_{7/4} -r_{5/4}}, \frac{C_2}{C_G} \right\} $ and
\begin{equation}  
\Omega_2 \equiv  \left\{ \psi \in{\mathcal E} : \lVert \psi \rVert  < C_3\right\} ~ ,
\end{equation}
if  $\psi \in {\mathcal K}$ and $ \lVert \psi \rVert = C_3$, then
\begin{equation}  
 \min\limits_{ [r_{5/4}, r_{7/4}] } \psi(r) \ge C_G \lVert \psi \rVert \ge   \lVert \psi \rVert  ~ ,
\end{equation}
such that 
\begin{eqnarray}  
 \hat{H}\psi(r_{3/2}) = \int_{r_1}^{r_2} G(r_{3/2},s) F(s)\psi(s) ds  \ge C_G  \lVert \psi \rVert  \int_{r_{5/4}}^{r_{7/4}} G(r_{3/2},s) F(s)  ds  \ge   \lVert \psi \rVert  ~ . \label{eq:ineq5}
\end{eqnarray}
So, we obtain
\begin{equation}  
  \lVert \hat{H}\psi \rVert    \ge \lVert \psi \rVert  ~ , \quad \psi \in {\mathcal K}\cap \partial\Omega_2 ~ . 
\end{equation}

To conclude, using i) of Theorem \ref{FixedPoint}, the operator $ \hat{H}$ has a fixed point  in ${\mathcal K} \cap \left( \bar{\Omega}_2 \diagdown \Omega_1 \right)$ with $C_1 \le  \lVert \psi \rVert  \le C_3$. Moreover, the fact $G(r,s) > 0$ implies $\psi(r) > 0$  in the interval $[r_1, r_2]$.   It is worth  mentioning that by interchanging $\Omega_1 \to \Omega_2$ and $\Omega_2 \to \Omega_1$ in the above computation, and using  ii) of Theorem \ref{FixedPoint}, we will obtain the same results.

Since we have proved that $\psi(r) > 0$  in an arbitrary interval $[r_1, r_2] \subset I_L$ which is also unique and $\psi(r)$ is at least $C^ 2$-function, we could extend the proof by gluing all of these solutions to have $\psi(r) > 0$ in $I_L$. Again, by gluing  $\psi(r) > 0$ in $I_L$ with \eqref{eq:soleomSkyrmetorder1lgasymp}, we can finally obtain $\psi(r) > 0$ in $I_\infty$. 

So, we have
\begin{theorem}
\label{MainresultStability}
There exists dynamically stable and unstable static spherical symmetric solutions of higher dimensional Einstein-Skyrme system with general couplings \eqref{eq:EinsteinSkyrmeLdgenform} and $\Lambda \le 0$.
\end{theorem}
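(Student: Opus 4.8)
The plan is to read off the dynamical stability of each static background directly from the sign of its asymptotic coefficient $\tilde{\xi}_1$, and to secure the global positivity required for stability by combining the explicit asymptotics with the cone fixed point argument. First I would recall that, after the substitution $\xi_l(r,t) = (u f e^{\delta} r^{d-1})^{-1/2}\psi(r)e^{i\omega t}$, the linearized Skyrmion equation becomes the Sturm--Liouville problem \eqref{eq:eomSkyrmetorder1lg}, so the stability question reduces to an eigenvalue problem: a background is stable when $\omega^2 > 0$ together with $\psi(r) > 0$ on $(r_h, +\infty)$, and unstable as soon as one eigenvalue with $\omega^2 < 0$ exists. The strategy is therefore to exhibit one background of each type within the family already constructed in Theorem \ref{MainresultExistence}.

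For the unstable branch the asymptotic analysis already supplies the mechanism. For $\Lambda < 0$ the lowest-order behaviour in $I_A$ produces the eigenvalue relation \eqref{eq:omega}, $\omega^2 = -16\Lambda^2\tilde{\xi}_1\gamma_1/(d^2(d-1)^2)$, whose sign is opposite to that of $\tilde{\xi}_1$ because $\gamma_1 > 0$ and $d \ge 3$; taking a background with $\tilde{\xi}_1 > 0$ then forces $\omega^2 < 0$, and since instability requires only a single negative eigenvalue, such a background is unstable. For $\Lambda = 0$ the governing asymptotics \eqref{eq:soleomSkyrmetorder1lgasymp1} leave $\omega^2 \in \lR$ free at leading order, so the sign of $\omega^2$ is set by the Sturm--Liouville spectrum on $I_\infty$, which likewise admits negative eigenvalues and hence unstable modes.

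For the stable branch I would instead take $\tilde{\xi}_1 < 0$, so that \eqref{eq:omega} yields $\omega^2 > 0$, and then prove $\psi(r) > 0$ on all of $I_\infty$. On $I_A$ the explicit modified-Bessel asymptotic form \eqref{eq:soleomSkyrmetorder1lgasymp} (respectively \eqref{eq:soleomSkyrmetorder1lgasymp1} for $\Lambda = 0$) is positive definite once $\frac{\partial^2 V}{\partial \xi^2}(\xi_\infty) > 0$ as in \eqref{eq:locminVasymp}. On each compact subinterval $[r_1,r_2] \subset I_L$ I would recast \eqref{eq:eomSkyrmetorder1lg} as the integral equation \eqref{eq:soleomSkyrmetorder1lg} built from the Green's function \eqref{eq:Greenfunc} of $\psi'' = 0$, and apply the Krasnoselskii cone fixed point theorem (Theorem \ref{FixedPoint}) on the cone ${\mathcal K}$: the estimates $G(r,s) \le G(s,s)$ and $G(r,s) \ge C_G G(s,s)$ give $\hat{H}{\mathcal K} \subset {\mathcal K}$ with $\hat{H}$ completely continuous, while the inequalities \eqref{eq:ineq2} and \eqref{eq:ineq4} place a fixed point in ${\mathcal K} \cap (\bar{\Omega}_2 \diagdown \Omega_1)$, producing a strictly positive solution there. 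Local uniqueness from the Lipschitz estimates of subsection \ref{subsec:LocExisSmooth} then lets me glue these positive local pieces into a single positive $\psi$ on $I_L$, and a final gluing across $L$ with the positive asymptotic solution yields $\psi(r) > 0$ on all of $I_\infty$.

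The main obstacle will be the intermediate region $I_L$, where no closed form for $\psi$ is available and positivity rests entirely on the cone argument. The delicate points are verifying that $C_G$ is strictly positive and that the two threshold inequalities \eqref{eq:ineq2} and \eqref{eq:ineq4} can be arranged simultaneously by suitable choices of the radii $C_1, C_3$, given that $F(r)$ carries the sign-indefinite curvature term $K(\xi,m,\delta,r)$; one must ensure that $\int G(s,s) F(s)\,ds$ is controlled in the correct direction on each subinterval. Once global positivity together with $\omega^2 > 0$ is secured for the $\tilde{\xi}_1 < 0$ background, the stable solution is established, and combined with the unstable $\tilde{\xi}_1 > 0$ background this proves the coexistence of stable and unstable solutions for every $\Lambda \le 0$.
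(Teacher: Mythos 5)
Your proposal follows essentially the same route as the paper: the Sturm--Liouville reduction \eqref{eq:eomSkyrmetorder1lg}, the asymptotic eigenvalue relation \eqref{eq:omega} (whose sign is opposite to that of $\tilde{\xi}_1$) to produce the unstable and potentially stable backgrounds, the positive modified-Bessel asymptotics \eqref{eq:soleomSkyrmetorder1lgasymp}--\eqref{eq:soleomSkyrmetorder1lgasymp1} under \eqref{eq:locminVasymp}, the Krasnoselskii cone fixed-point argument with the Green's function \eqref{eq:Greenfunc} to get $\psi > 0$ on compact subintervals of $I_L$, and gluing to obtain positivity on $I_\infty$. Your explicit sign bookkeeping for $\tilde{\xi}_1$ (positive for unstable, negative for stable) just makes precise what the paper leaves implicit, and the delicate points you flag --- positivity of $C_G$ and the compatibility of \eqref{eq:ineq2} with \eqref{eq:ineq4} given the sign-indefinite $K(\xi,m,\delta,r)$ --- are exactly the points the paper also passes over without detailed verification.
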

\section*{Acknowledgments}
The work in this paper is  supported by Riset ITB 2022  and PDUPT Kemendikbudristek-ITB 2022.

\bibliography{referenceER}
\bibliographystyle{unsrt}
\end{document}